\author{Lorenzo Miretti~\IEEEmembership{Member,~IEEE}, Giuseppe Caire~\IEEEmembership{Fellow,~IEEE}, Sławomir Sta\'nczak~\IEEEmembership{Senior Member,~IEEE}
\thanks{L.~Miretti, G. Caire, and S.~Sta\'nczak are with the Technische Universität Berlin, Berlin 10587, Germany (email: \{miretti, caire, stanczak\}@tu-berlin.de). L.~Miretti and S.~Sta\'nczak are also with the Fraunhofer Institute for Telecommunications Heinrich-Hertz-Institut HHI, Berlin 10587, Germany. The authors acknowledge the financial support by the Federal Ministry of Education and Research of Germany in the programme of “Souverän. Digital. Vernetzt.” Joint project 6G-RIC, project identification numbers: 16KISK020K, 16KISK030.}
}
\title{Robust mmWave/sub-THz multi-connectivity using minimal coordination and coarse synchronization}
\newcommand{\eqdef}{:=}
\newcommand{\E}{\mathsf{E}}		
\renewcommand{\P}{\mathsf{Pr}} 			
\newcommand{\stdset}[1]{\mathbbmss{#1}}	
\newcommand{\set}[1]{\mathcal{#1}}		
\renewcommand{\vec}[1]{\bm{#1}}		
\newcommand{\CN}{\mathcal{CN}}			
\newcommand{\herm}{\mathsf{H}}			
\newcommand{\T}{\mathsf{T}}				
\newtheorem{lemma}{Lemma}
\newtheorem{proposition}{Proposition}
\newtheorem{remark}{Remark}
\newtheorem{corollary}{Corollary}
\begin{document}
\maketitle
\IEEEpubid{\begin{minipage}{\textwidth}\ \\[12pt] \centering
“© 2024 IEEE. Personal use of this material is permitted. Permission 
from IEEE must be obtained for all other uses, in any current or future 
media, including reprinting/republishing this material for advertising or 
promotional purposes, creating new collective works, for resale or 
redistribution to servers or lists, or reuse of any copyrighted 
component of this work in other works.
 \end{minipage}}

\begin{abstract}
This study investigates simpler alternatives to coherent joint transmission for supporting robust connectivity against signal blockage in mmWave/sub-THz access networks. By taking an information-theoretic viewpoint, we demonstrate analytically that with a careful design, full macrodiversity gains and significant SNR gains can be achieved through canonical receivers and minimal coordination and synchronization requirements at the infrastructure side. Our proposed scheme extends non-coherent joint transmission by employing a special form of diversity to counteract artificially induced deep fades that would otherwise make this technique often compare unfavorably against standard transmitter selection schemes. Additionally, the inclusion of an Alamouti-like space-time coding layer is shown to recover a significant fraction of the optimal performance. Our conclusions are based on a statistical single-user multi-point intermittent block fading channel model that, although simplified, enables rigorous ergodic and outage rate analysis, while also considering timing offsets due to imperfect delay compensation. In addition, we validate our theoretical approach by means of deterministic ray-tracing simulations that capture the essential features of next generation mmWave/sub-THz communications.
\end{abstract}

\section{Introduction}
A well-known major problem in mobile access networks operating at very high carrier frequencies, such as in the upper mmWave or sub-THz bands, is that they are highly sensitive to signal blockage \cite{keusgen2016shadowing,rappaport2017fading,sundeep2018blockage, aykin2019multibeam}. The reason is due to unavoidable physical characteristics of the propagation medium as well as the expected mode of operation, which relies on highly directional line-of-sight transmission in the noise limited regime \cite{rappaport2015wideband,song2020fully,miretti2023little}. This sensitivity causes connection instability and severely degrades the overall system performance. 
\IEEEpubidadjcol

In order to mitigate the signal blockage and provide robust connectivity, several multi-connectivity concepts are advocated by academia, industry, and standardization bodies \cite{rappaport2019diversity,3gpp2020multiconnectivity,qualcomm2021comp,ericsson2023concept}. In fact, a similar trend is also followed in the context of visible-light communication \cite{jungnickel2019optical}, where signal blockage is an evident issue. Essentially, all these concepts attempt to capitalize on the so-called \textit{macrodiversity} gains offered by simultaneous and possibly coordinated connections to multiple access points. The proposed technologies range from relatively simple control plane approaches for fast and efficient access point selection \cite{skouroumounis2017base,giordani2018control,gerasimenko2019capacity}, to data plane approaches that involve, for instance, concurrent transmission over orthogonal resources (i.e., parallel channels) \cite{fettweis2019reliable}, or even advanced \textit{network} multiple-input multiple-output (MIMO) techniques based on joint processing of distributed antenna systems \cite{tuninetti2016coverage,fettweis2020network,tolli2021blockage}. The present study considers variations of the latter technology, focusing on downlink transmission.
\begin{figure}[t!]
\centering
\includegraphics[width=0.7\linewidth]{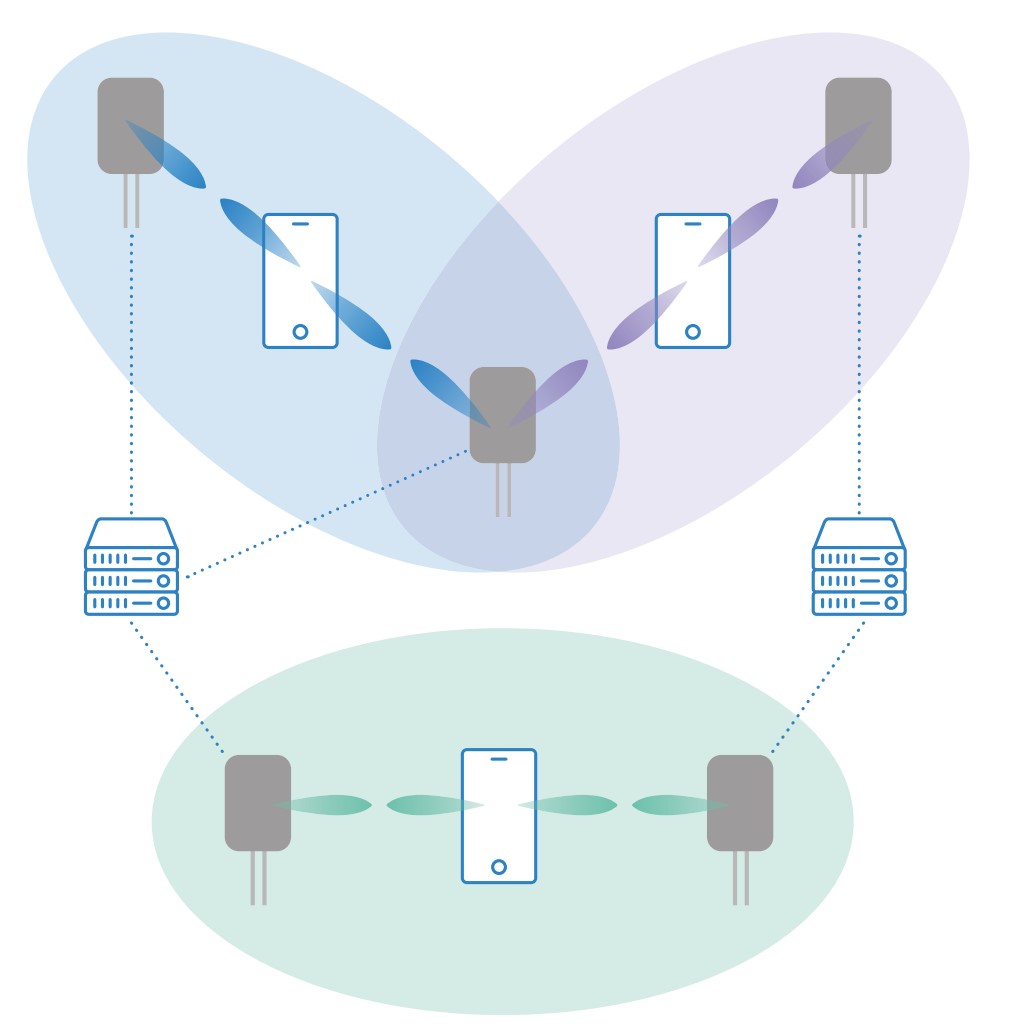}
\caption{Network of multiple transmitters simultaneously connected to multiple receivers via high-rate and directional mmWave/sub-THz links. }
\label{fig:model}
\end{figure}

\subsection{Summary of prior studies}
\IEEEpubidadjcol
The use of network MIMO techniques in distributed antenna systems has gained significant attention due to its potential to provide the best theoretical performance among all multi-connectivity concepts \cite{gesbert2010multicell}. However, the classical literature on network MIMO is typically based on ideal assumptions and often does not cover the specific characteristics of mmWave/sub-THz network MIMO systems, in particular in terms of channel models, hardware capabilities, and energy efficiency aspects. Therefore, it may lead to misleading conclusions for practical implementations. These limitations are addressed by several recent studies, focusing on different aspects of mmWave/sub-THz network MIMO systems. For example, \cite{tuninetti2016coverage} studies the benefits of network MIMO in terms of coverage using stochastic geometry tools, by assuming analog beamforming and different channel models. Furthermore, \cite{fettweis2020network} studies the benefits in terms of ergodic sum rate by assuming beam-domain statistical precoding, aided by a per-beam synchronization technique that significantly simplifies the system design. Moreover, robust precoding design against unknown signal blockage events is studied in \cite{tolli2021blockage}. 

More recently, a considerable number of studies started considering the extension to mmWave/sub-THz systems of the popular \textit{cell-free} massive MIMO framework \cite{demir2021foundations}, originally developed for sub-6GHz systems as a modern and refined version of network MIMO with special attention to time-division duplex operation, user-centric resource allocation, and scalability in terms of signal processing and system architecture. In this context, common investigated aspects are: energy efficiency \cite{alonzo2019energy, garcia2020energy, nguyen2022hybrid,guo2023energy,ismath2021deep,dasilva2023energy}; initial access and users to access points association  \cite{dandrea2021cell,zaher2022soft,pollin2021association,eryani2022self,ismath2021deep,wang2022joint, kamiwatari2022rf}; beam alignment and hybrid beamforming design 
\cite{femenias2019fronthaul,buzzi2022beam,bennis2022cellfree,femenias2022wideband,nguyen2022hybrid,feng2022weighted,zhang2020delay,pollin2021association,jafri2023cooperative,eryani2022self,ismath2021deep,wang2022joint,studer2022beam,kassam2022distributed,kim2023beam,kamiwatari2022rf}; 
limited capacity fronthaul \cite{femenias2019fronthaul,bennis2022cellfree,wang2022joint,kassam2022distributed}; impairments such as channel aging \cite{dandrea2021cell}, beam squint \cite{femenias2022wideband}, or low resolution data converters \cite{bennis2022cellfree}; and ray tracing simulations \cite{zaher2022soft,dasilva2023energy,ismath2021deep,studer2022beam}. A recurrent feature is that, motivated by the need for devices with low hardware complexity and energy consumption, the classical fully digital beamforming architecture is replaced by analog or hybrid beamforming architectures \cite{tuninetti2016coverage,fettweis2020network,alonzo2019energy,femenias2019fronthaul,dandrea2021cell,buzzi2022beam,bennis2022cellfree,femenias2022wideband,nguyen2022hybrid,feng2022weighted,zhang2020delay,pollin2021association,jafri2023cooperative,guo2023energy,eryani2022self,ismath2021deep,wang2022joint,studer2022beam,kassam2022distributed,kim2023beam,kamiwatari2022rf,zhu2024cellfree}, in line with most of the related literature on mmWave/sub-THz cellular systems. 

\subsection{Limitations of prior studies}
\label{ssec:prior}
A significant limitation of the available literature on network MIMO (and possibly cell-free) techniques for mmWave/sub-THz access is that it predominantly focuses on coherent joint transmission. In addition to large macrodiversity and SNR gains, this technique offers particularly advanced interference management capabilities. However, it requires tight synchronization and accurate channel estimation at the transmitters. Unfortunately, achieving sufficient synchronization and channel estimation accuracy for coherent joint transmission is notoriously a highly non-trivial task \cite{tuninetti2016coverage,fettweis2020network}, even for sub-6GHz systems \cite{irmer2011coordinated}. Due to the significantly higher carrier frequency, bandwidth, and lower quality hardware, this issue can only worsen in mmWave/sub-THz systems. Hence, it is expected that phase coherent joint transmission in mmWave/sub-THz systems will likely be unfeasible, especially in the early generations of such systems. Note that  the per-beam synchronization technique proposed in \cite{fettweis2020network} is used for selecting more efficient orthogonal frequency-division multiplexing (OFDM) parameters than in traditional approaches, but not to resolve the fundamental issues preventing coherent joint transmission. Nevertheless, since mmWave/sub-THz systems are envisioned to operate in the noise-limited regime with ultra-large bandwidths, advanced interference management is of secondary importance. In this context, alternative techniques that do not require tightly synchronized access points and accurate global channel estimation, but still deliver significant macrodiversity and SNR gains, become particularly appealing.

To the best of our knowledge, alternative techniques to coherent joint transmission are reported only in \cite{tuninetti2016coverage,dandrea2021cell,buzzi2022beam,fettweis2020network}. Specifically, \cite{tuninetti2016coverage,dandrea2021cell,buzzi2022beam} consider non-coherent joint transmission of a single statistically beamformed data stream per user, i.e., of standard scalar codewords. However, this technique effectively induces strong multipath components that may sum up non-coherently at the receiever. As we will see, the impact of this artificially induced fading may be quite severe in several practical regimes of interest, at the point that non-coherent joint transmission may compare unfavorably with respect to simpler alternatives based on transmitter selection \cite{skouroumounis2017base,giordani2018control,gerasimenko2019capacity}. In contrast, \cite{fettweis2020network} considers spatially multiplexed and statistically beamformed joint transmission of a possibly very large number (theoretically up to the total number of transmit antennas) of data streams per user, i.e., vector-valued codewords. As shown in \cite{fettweis2020network} and in this study, this technique may have remarkable theoretical performance. However, \cite{fettweis2020network} does not discuss the practical implications of choosing this technique. For example, \cite{fettweis2020network} does not discuss how to achieve the theoretical gains using decoding algorithms with practical complexity, especially in the common regime where the number of antennas (or radio frequency chains) at the receiver is smaller than the number of transmitted data streams, i.e., where linear processing cannot demultiplex the data streams. Furthermore, \cite{fettweis2020network} does not compare the performance of the proposed technique against other simpler alternatives such as the ones based on transmitter selection. These aspects are crucial because, given the extremely high rates of mmWave/sub-THz communication systems, low-complexity processing is of primary importance.

\subsection{Contributions of this study}
In this study, we give the first comprehensive investigation of alternative techniques to coherent joint transmission tailored to mmWave/sub-THz network MIMO systems. In particular, we review and extend several available techniques and compare them in terms of macrodiversity gains, SNR gains, as well as the required network infrastructure and receiver architecture. In contrast to prior studies, we adopt an information theoretic viewpoint and, starting from first principles, we study the problem of reliable communication over a simple single-user multi-point \textit{intermittent} block fading channel. Despite being considerably simplified, we demonstrate via realistic ray-tracing simulations that the proposed model still captures the most essential features of the envisioned mmWave/sub-THz systems, such as line-of-sight transmission in the noise limited regime and sensitivity to signal blockages. 

Our main result is the identification of a family of transmission techniques that extends non-coherent joint transmission as a promising candidate for supporting robust transmission with a simpler network infrastructure and receiver architecture than the ones required by the competing techniques outlined in Section~\ref{ssec:prior}. More precisely, we demonstrate analytically that with a careful design, full macrodiversity gains and significant SNR gains can be achieved through the use of linear receiver processing, standard scalar point-to-point coding, minimal coordination across the transmitters, and coarse synchronization. Here, by minimal coordination, we mean that the signals are jointly formed either without any instantaneous channel knowledge, or based on knowledge of blockage events at most. To this end, a first contribution is the study of a special form of diversity, called \textit{phase} diversity \cite{dammann2002low}, which is particularly suitable for the considered multi-point intermittent fading model and makes the effective channel fluctuations essentially driven by blockage events only. A second contribution is the study of a simple \textit{space-time} coding layer that extends the well-known Alamouti scheme \cite{alamouti1998} to provide an additional performance boost. Finally, a third contribution is the study of the impact of unknown timing offsets based on a worst-case approach similar to the information theoretical literature on asynchronous \cite{cover1982asynchronous} or arbitrarily varying \cite{blackwell1960capacities} channels.

\subsection{Outline and notation}
The rest of this study is organized as follows. Section~\ref{sec:model} presents the proposed channel model and its fundamental limits. Section~\ref{sec:schemes} studies and extends several suboptimal transmission schemes for the considered channel, and provides practical design guidelines. Section~\ref{sec:time} shows how to apply the obtained insights to imperfect time synchronization and delay compensation. In Section~\ref{sec:sim}, our approach is validated by means of ray-tracing simulations. Section~\ref{sec:conclusion} summarizes the results and discusses promising research directions.

We use $\stdset{R}$, $\stdset{C}$, and $\stdset{Z}$ to denote, respectively, the set of real numbers, complex numbers, and integers. Boldface lower case letters, $\vec{a}$, denote column vectors, and boldface upper case letters, $\vec{A}$, denote matrices. The $n$th entry of $\vec{a}$ is denoted by $a_n$. We use $\eqdef$ for definitions. We denote by $\vec{a}^\T$ and $\vec{a}^\herm$ the transpose and Hermitian transpose of $\vec{a}$, respectively. We use $\P(\cdot)$ for the probability of an event, and $\E[\cdot]$ for the expectation operator. Given a stationary random process $\{\vec{a}_t\}_{t\in \stdset{Z}}$, we denote an arbitrary realization by $\vec{a}$. All (in)equalities involving random variables should be interpreted as almost sure (in)equalities.

\section{Intermittent block fading channels}
\label{sec:model}
In this section, we present and study the fundamental limits of the proposed multi-point intermittent block fading channel model that forms the basis of our analysis. 
 
\subsection{System architecture}
We consider the downlink of a mmWave/sub-THz network where each receiver is jointly served by multiple transmitters in the same time-frequency resource, and possibly in a user-centric fashion (see Figure~\ref{fig:model}). We assume an analog beamforming architecture in which each transmitter is equipped with a single digital antenna port driving a highly directional steerable beamforming antenna. We further assume the receiver to be equipped with a single digital antenna port driving either an  omnidirectional antenna, or a steerable multi-lobe analog beamforming antenna \cite{aykin2019multibeam}. As customary, line-of-sight directional transmission is employed to counteract the large path-loss at mmWave/sub-THz frequencies. Moreover, we focus on a single-user transmission model, i.e., we do not consider multi-user interference, as done in, e.g., \cite{aykin2019multibeam,zhu2024cellfree}, and in nearly all studies assuming standard orthogonal multiple access based on time and/or frequency division multiplexing.  

Although more advanced receivers may be equipped with multiple digital antenna ports, the single receiver port case is an insightful example of the likely scenario where the number of receive ports is much smaller than the total number of transmit ports. In addition, our model could also be extended to a modular hybrid beamforming architecture where each transmitter is equipped with a small number of digital antenna ports, each serving a different receiver through a separate highly directional steerable beamforming antenna.  In fact, if resources (time, frequency, and highly directional beamforming antennas) are properly allocated, the residual interference can be typically neglected in the noise-limited regime. The main advantage of this architecture is that it offers relatively good spatial multiplexing capabilities while keeping hardware complexity and synchronization requirements low. However, the underlying resource allocation problems are beyond the scope of this study. Hence, the  analysis of the above modular architecture is left as a promising future research direction. \color{black} 

\subsection{Channel model}
To facilitate analytical treatability, while still capturing the essence of the problem, we model the time-domain downlink input-output relation between an arbitrary receiver and $L$ transmitters with comparable path loss using the following simple multi-point intermittent block fading model $(\forall m \in \stdset{Z})$:
\begin{equation*}
y[m] = \sum_{l=1}^Lh_{l,t}x_l[m] + z[m],\quad h_{l,t} = \beta_{l,t} e^{j\theta_{l,t}}, \quad t=\left\lfloor \frac{m}{T}\right\rfloor,
\end{equation*}
where $y[m]\in \stdset{C}$ is the received signal, $x_l[m]\in \stdset{C}$ is the signal of the $l$th transmitter, $z[m]\sim \CN(0,1)$ is a sample of a white Gaussian noise process, $\{\theta_{l,t}\}_{t\in\stdset{Z}}$ is an independent stationary and ergodic random process with first order distribution $\text{Uniform}(0,2\pi)$ modelling random phase offsets for the $l$th transmitter, and $\{\beta_{l,t}\}_{t\in\stdset{Z}}$ is an independent (stationary and ergodic) binary Markov chain modelling random blockage events for the $l$th transmitter, parametrized by the transition probabilities from connected to blocked state $p := \P(\beta_{l,t+i} = 0|\beta_{l,t} = 1)<1$, and from blocked to connected state $q := \P(\beta_{l,t+i} = 1|\beta_{l,t} = 0)<1$. The state $\vec{h}_t:=[h_{1,t},\ldots,h_{L,t}]^\T$ stays constant for a block of $T\gg 1$ channel uses, and evolves from block to block according to a stationary and ergodic random process indexed by $t\in \stdset{Z}$.

Although simplified, the proposed model captures the essential features of the envisioned mmWave/sub-THz network while allowing us to gain analytical insights for the design of practical systems. The main underlying assumptions are listed and motivated below:
\begin{itemize}[leftmargin=*]
\item We neglect multipath propagation and consider only the line-of-sight paths between each transmitter and the receiver, as, e.g., in \cite{tuninetti2016coverage}. This is supported by several measurement campaigns showing  that, in typical environments, reflected multipath components are often significantly weaker than the line-of-sight component when using directional line-of-sight transmission \cite{molish2022directionally,alper2023uniform,miretti2023little}. Therefore, these components can be neglected in the noise limited regime \cite{miretti2023little}.
\item We model blockages statistically, using a somewhat artificial binary Markov chain. This is different than the deterministic worst-case approach followed, e.g., in \cite{tolli2021blockage}. The channel modeling community has suggested that, although simplified, similar statistical models based on Markov chains can be instrumental for studying the performance of mmWave/sub-THz systems subject to rapid blockages (see, e.g., \cite{rappaport2017fading,rappaport2019NYUSIM}). Following a similar approach, our binary state model is motivated by the observation (also confirmed by our realistic simulations in Section~\ref{sec:sim}) that practical path loss variations induced by blockage events are often so large that the useful signal power becomes negligible in the noise-limited regime, effectively leading to an 'on-off' behavior. The Markov property is introduced as it leads to a stationary and ergodic random state process, enabling statistical analysis in terms of (first-order) macroscopic parameters such as the blockage probability, which are assumed constant over a sufficiently long period compared to the duration of transmission. Note that stationarity and ergodicity are required in nearly every practically useful information-theoretic study on channels with memory. For example, we point out that a binary Markov chain is commonly used by the information-theoretic community for developing coding techniques for reliable communication over channels that experience sudden transitions between a `good' and a `bad' state, such as in the well-known Gilbert-Elliot model \cite{mushkin1989gilbert}.  Moreover, we remark that the assumption of independent blockages across multiple transmitters is reasonable when blockages are caused by events close to the transmitters \cite{rappaport2019diversity}. Multi-state models and/or correlated blockages may also be investigated, but such analysis is left for future work.
\item We use random phase shifts unkown at the transmitters to model both small-scale channel variations and the lack of phase synchronization at the transmitters, as, e.g., in \cite{tuninetti2016coverage}. This is motivated by the known challenges in compensating channel phases and local oscillators drifts at the transmitters.
\item We assume perfect time synchronization and delay compensation. While this may not be a realistic assumption, later in Section~\ref{sec:time} we show that the insights obtained with our ideal model can be applied also to more realistic systems where the transmitters can only achieve coarse time synchronization and delay compensation. 
\end{itemize}

We conclude this section by recalling some useful definitions and properties related to the joint blockage process $\{\vec{\beta}_t\}_{t\in \stdset{Z}}:=\{(\beta_{1,t},\ldots,\beta_{L,t})\}_{t\in \stdset{Z}}$. Following known properties of the stationary distribution of binary Markov chains, we define the blockage probability $(\forall l \in \{1,\ldots,L\})$   
\begin{equation*}
p_B \eqdef \P(\beta_l = 0) = \frac{p}{p+q},
\end{equation*}
where we recall that $\beta_l$ denotes a realization of $\beta_{l,t}$. Since the joint blockage process is governed by $L$ i.i.d. binary Markov chains, its stationary distribution is readily given in terms of the above quantities as $
\P[\vec{\beta} = (b_1,\ldots,b_L)] = \prod_{l=1}^L \P(\beta_l=b_l)$. We also define the number of non-blocked transmitters $\alpha_t := \sum_{l=1}^L\beta_{l,t}$, and observe that $\alpha \sim \text{Binomial}(L,1-p_B)$. We readily have that $
\P(\alpha \geq 1) = \P(\vec{\beta} \neq \vec{0}) = 1-p_B^L$, which formalizes one key benefit of macrodiversity, that is, the probability of a receiver to experience complete blockage decreases with $L$. 

\subsection{Capacity and optimal transmission}
In this section, we study the ergodic and outage capacity of the considered channel, defined as follows using the information-theoretic notion of (not necessarily memoryless) state-dependent channel with causal state information at the transmitter (see, e.g., \cite{biglieri1998fading,caire1999capacity}). We first assume that the source messages are available at all transmitters, as in most studied based on transmitter cooperation. Moreover, we define the channel state $(\vec{\beta},\vec{\theta})$ and assume perfect channel state information at the receiver (CSIR), as in most studies based on coherent demodulation and decoding. Furthermore, we assume that all transmitters know $\vec{\beta}$, i.e., we assume centralized partial channel state information at the transmitters (CSIT). The term \textit{centralized} is due to the fact that this is equivalent to a setup where the signals of the different transmitters are formed by a central controller endowed with CSIT $\vec{\beta}$. We do not consider the case of different channel state information at each transmitter, a configuration known as \textit{distributed} CSIT \cite{miretti2021cooperative}. The term \textit{partial} refers to the fact that the transmitters do not know the full state $(\vec{\beta},\vec{\theta})$. Note that, in contrast to the phases $\vec{\theta}$, the blockages $\vec{\beta}$ are typically slowly-varying macroscopic parameters ($p,q \ll 1$). Hence, they can be easily estimated at the transmitter side, e.g., by means of uplink pilot signals in both time-division and frequency-division duplex systems. Finally, we assume an instantaneous power constraint $P\in \stdset{R}_{+}$ per transmitter. In absence of latency constraints, standard arguments show that a well-defined ergodic capacity in the classical Shannon sense exists \cite{caire1999capacity}, and it can be expressed (in bits/symbol) as 
\begin{equation*}
C := \sup_{\vec{Q}\in \set{Q}}\E[\log(1+\vec{h}^\herm \vec{Q}(\vec{\beta}) \vec{h})],
\end{equation*}
where $\set{Q}$ denotes the set of functions mapping each CSIT realization $\vec{\beta}$ to a complex-valued symmetric positive semidefinite matrix $\vec{Q}(\vec{\beta})$ of size $L\times L$, with diagonal entries satisfying 
\begin{equation*}
(\forall l \in \{1,\ldots,L\})~[\vec{Q}(\vec{\beta})]_{l,l} \leq P.
\end{equation*} 
For a given $\vec{Q}\in \set{Q}$, the rate $R = \E[\log(1+\vec{h}^\herm \vec{Q}(\vec{\beta}) \vec{h})]$ is achievable by vector Gaussian codes with conditional input covariance $\vec{Q}(\vec{\beta})$, i.e., by letting $(\forall m \in \stdset{Z})$
\begin{equation}\label{eq:tx_signal}
\begin{bmatrix}
x_1[m] &
\ldots &
x_L[m] 
\end{bmatrix}^\T = \vec{Q}(\vec{\beta}_t)^{\frac{1}{2}}\vec{u}[m], \quad t=\left\lfloor \frac{m}{T}\right\rfloor.
\end{equation}
where $\vec{u}[m] \sim \CN(\vec{0},\vec{I})$ is a vector-valued i.i.d. information bearing signal. 

\begin{proposition}\label{prop:C}
The ergodic capacity of the considered channel assuming partial CSIT $\vec{\beta}$ and an instantaneous power constraint $P\in \stdset{R}_+$ per transmitter is given by
\begin{equation}\label{eq:C}
C = \E[\log(1+\alpha P)],
\end{equation}
where $\alpha = \sum_{l=1}^L\beta_l \sim \text{Binomial}(L,1-p_B)$.
\end{proposition}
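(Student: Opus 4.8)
The plan is to evaluate the supremum in the capacity formula recalled just above by pairing a converse bound, obtained through averaging over the unknown phases, with a matching achievability argument.

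\emph{Converse.} First I would fix an arbitrary policy $\vec{Q}\in\set{Q}$ and condition on a blockage realization $\vec{\beta}$. Writing the effective channel as $\vec{h}=\vec{D}\,\vec{e}$ with $\vec{D}=\mathrm{diag}(\beta_1,\ldots,\beta_L)$ and $\vec{e}=[e^{j\theta_1},\ldots,e^{j\theta_L}]^\T$, the received-SNR term becomes $\vec{h}^\herm\vec{Q}(\vec{\beta})\vec{h}=\vec{e}^\herm\tilde{\vec{Q}}\vec{e}$ with $\tilde{\vec{Q}}:=\vec{D}\vec{Q}(\vec{\beta})\vec{D}$. Expanding this quadratic form and taking the expectation over $\vec{\theta}$, every cross term carries a factor $\E[e^{j(\theta_k-\theta_l)}]=0$ (the phases being independent across $l$ and $\mathrm{Uniform}(0,2\pi)$), so only the diagonal survives: $\E_{\vec{\theta}}[\vec{e}^\herm\tilde{\vec{Q}}\vec{e}]=\mathrm{tr}(\tilde{\vec{Q}})=\sum_{l=1}^L\beta_l[\vec{Q}(\vec{\beta})]_{l,l}$. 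Applying Jensen's inequality to the concave logarithm and then the per-transmitter constraint $[\vec{Q}(\vec{\beta})]_{l,l}\le P$ gives
\[
\E_{\vec{\theta}}\!\left[\log\!\left(1+\vec{h}^\herm\vec{Q}(\vec{\beta})\vec{h}\right)\right]\le\log\!\left(1+\sum_{l=1}^L\beta_l[\vec{Q}(\vec{\beta})]_{l,l}\right)\le\log(1+\alpha P),
\]
where $\alpha=\sum_{l=1}^L\beta_l$. Averaging over $\vec{\beta}$ and taking the supremum over $\set{Q}$ then yields $C\le\E[\log(1+\alpha P)]$.

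\emph{Achievability and conclusion.} For the reverse inequality I would simply take the constant isotropic policy $\vec{Q}(\vec{\beta})=P\vec{I}\in\set{Q}$. Then $\vec{h}^\herm\vec{Q}(\vec{\beta})\vec{h}=P\|\vec{h}\|^2=P\sum_{l=1}^L\beta_l^2=\alpha P$, which is deterministic given $\vec{\beta}$, so the Jensen step above holds with equality and this policy attains $\E[\log(1+\alpha P)]$; hence $C=\E[\log(1+\alpha P)]$. The distributional claim $\alpha\sim\mathrm{Binomial}(L,1-p_B)$ is then immediate from the product form of the stationary distribution of $\vec{\beta}$ noted in Section~\ref{sec:model}. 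The only genuinely delicate point is confirming that the Jensen bound is not loose — i.e.\ that neither correlating the transmit streams nor steering power away from blocked transmitters helps — and this is exactly what the explicit choice $\vec{Q}=P\vec{I}$ settles, since for it the effective SNR $\alpha P$ carries no residual randomness given $\vec{\beta}$; beyond that, the argument is routine bookkeeping.
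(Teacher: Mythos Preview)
Your proof is correct and follows essentially the same approach as the paper: a Jensen bound over the unknown phases conditional on $\vec{\beta}$, yielding the trace expression $\sum_l \beta_l [\vec{Q}(\vec{\beta})]_{l,l}$, followed by the per-transmitter constraint and then achievability via $\vec{Q}=P\vec{I}$. The paper packages the phase-averaging step as $\E[\vec{h}\vec{h}^\herm\mid\vec{\beta}]=\mathrm{diag}(\vec{\beta})$, whereas you spell out the decomposition $\vec{h}=\vec{D}\vec{e}$ and the vanishing of the cross terms, but the argument is the same.
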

\begin{proof}
By the law of total expectation and Jensen's inequality, we obtain the upper bound $(\forall \vec{Q}\in \set{Q})$
\begin{align*}
\E[\log(1+\vec{h}^\herm \vec{Q}(\vec{\beta}) \vec{h})] 
 & \leq \E[\log(1+\E[\vec{h}^\herm \vec{Q}(\vec{\beta}) \vec{h}|\vec{\beta}])] \\
 & = \E[\log(1+\mathrm{tr}(\E[\vec{h} \vec{h}^\herm  |\vec{\beta}]\vec{Q}(\vec{\beta})))] \\
 & = \E[\log(1+\mathrm{tr}(\mathrm{diag}(\vec{\beta})\vec{Q}(\vec{\beta})))] \\
 & = \E\left[\log\left(1+\sum_{l=1}^L\beta_l[\vec{Q}(\vec{\beta})]_{l,l}\right)\right] \\
 & \leq \E[\log(1+\alpha P)].
\end{align*}
Choosing $\vec{Q}(\vec{\beta})=P\vec{I}$ concludes the proof. 
\end{proof}
\begin{corollary}
The proof of Proposition~\ref{prop:C} also shows that the ergodic capacity of the considered channel assuming no CSIT and an instantaneous power constraint $P\in \stdset{R}_+$ per transmitter is given by
\begin{equation*}
C = \E[\log(1+\alpha P)].
\end{equation*}
i.e., knowledge of the blockage process $\vec{\beta}$ at the transmitters does not increase capacity.
\end{corollary}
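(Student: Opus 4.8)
The plan is to bracket the no-CSIT ergodic capacity, call it $C_{\mathrm{no\,CSIT}}$, between $\E[\log(1+\alpha P)]$ from above and below, reusing the computations already done in the proof of Proposition~\ref{prop:C}. By the same standard arguments invoked for the partial-CSIT case (perfect CSIR, block-ergodic fading, no latency constraint), $C_{\mathrm{no\,CSIT}}$ admits the single-letter form $C_{\mathrm{no\,CSIT}} = \sup_{\vec{Q}}\E[\log(1+\vec{h}^\herm\vec{Q}\vec{h})]$, where now the supremum is over a \emph{constant} complex-valued positive semidefinite matrix $\vec{Q}$ with $[\vec{Q}]_{l,l}\le P$ for all $l$, i.e.\ the input covariance may no longer depend on $\vec{\beta}$.

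For the converse, observe that any admissible constant $\vec{Q}$ is a (degenerate) element of the set $\set{Q}$ appearing in Proposition~\ref{prop:C}, since a constant map is in particular a function of $\vec{\beta}$ satisfying the diagonal constraint almost surely. Hence the chain of inequalities in the proof of Proposition~\ref{prop:C}, established for every $\vec{Q}\in\set{Q}$, applies verbatim and gives $\E[\log(1+\vec{h}^\herm\vec{Q}\vec{h})]\le \E[\log(1+\alpha P)]$ for every such constant $\vec{Q}$; taking the supremum yields $C_{\mathrm{no\,CSIT}}\le \E[\log(1+\alpha P)]$. For achievability, note that the covariance attaining the bound in Proposition~\ref{prop:C}, namely $\vec{Q}=P\vec{I}$, is already constant and thus requires no CSIT. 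Transmitting i.i.d.\ Gaussian codewords with this covariance (i.e., \eqref{eq:tx_signal} with $\vec{Q}(\vec{\beta})$ replaced by $P\vec{I}$) achieves $\E[\log(1+\vec{h}^\herm(P\vec{I})\vec{h})]=\E[\log(1+P\sum_{l=1}^L\beta_l)]=\E[\log(1+\alpha P)]$, where we used $|h_l|^2=\beta_l^2=\beta_l$. Therefore $C_{\mathrm{no\,CSIT}}\ge \E[\log(1+\alpha P)]$, and combining the two bounds proves the claim (and, a fortiori, that CSIT does not help, since $C_{\mathrm{no\,CSIT}}\le C$ trivially and here both equal $\E[\log(1+\alpha P)]$).

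The argument is essentially immediate once the single-letter characterization of $C_{\mathrm{no\,CSIT}}$ is granted, so the only point requiring care — the nearest thing to an obstacle — is justifying that characterization: that with perfect CSIR, block-ergodic fading, and no delay constraint, the no-CSIT ergodic capacity is attained by Gaussian inputs with a fixed covariance. This is a classical fact for fading channels with receiver-only CSI (as in the references already cited for the partial-CSIT setting), and in the present model it can also be seen directly, since conditioned on $\vec{\beta}$ the channel is a fixed Gaussian vector channel on which Gaussian inputs are optimal, and ergodicity of $\{\vec{\beta}_t\}_{t\in\stdset{Z}}$ lets one average the per-block rates.
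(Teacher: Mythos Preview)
Your proposal is correct and follows exactly the route the paper intends: the corollary has no separate proof in the paper, since the upper bound in Proposition~\ref{prop:C} already holds for every $\vec{Q}\in\set{Q}$ (in particular for constant $\vec{Q}$), and the achieving covariance $\vec{Q}=P\vec{I}$ is already independent of $\vec{\beta}$. Your only addition is the explicit justification of the single-letter no-CSIT characterization, which the paper leaves implicit.
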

The above analysis shows that the ergodic capacity \eqref{eq:C} can be achieved by the transmission scheme in \eqref{eq:tx_signal} with fixed diagonal input covariance independent of $\vec{\beta}$. Note that this is essentially the same transmission scheme considered in \cite{fettweis2020network}, left apart a power allocation step. However, although optimal from an information-theoretic viewpoint, this scheme presents several practical drawbacks. First and foremost, being equipped with a single antenna, the receiver cannot demultiplex the vector-valued information bearing signal using linear processing, and hence it must perform complex algorithms such as maximum-likelihood vector decoding or approximate versions based, e.g., on successive interference cancellation. Second, although \eqref{eq:C} can be theoretically achieved without CSIT, this would require codewords spanning a very large number of blockage realizations, hence largely exceeding practical latency constraints. Therefore, to reduce latency, practical systems should typically employ block codes of maximum length $T$. By interpreting $\log(1+\alpha P)$ as the maximum achievable rate in a given block-fading realization, the ergodic capacity \eqref{eq:C} can be then approached by means of rate adaptation mechanisms based on CSIT $\alpha$. Note that an intermediate solution in terms of latency is to use hybrid automatic retransmission-request (HARQ) mechanisms \cite{sesia2004incremental}, which require some feedback to the transmitters. 

\begin{figure}[ht!]
\centering
\includegraphics[width=1\linewidth]{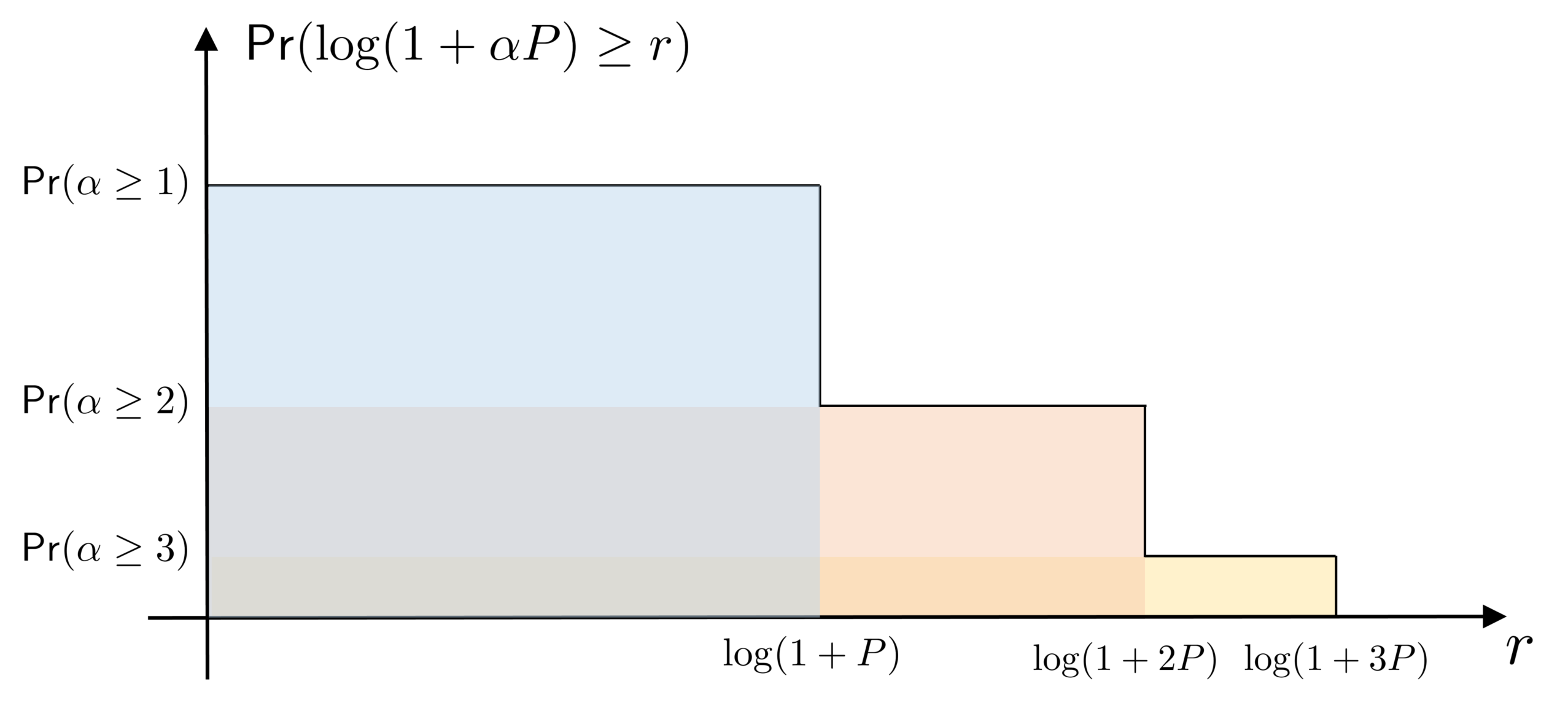}
\caption{Pictorial representation of the CCDF of the instantaneous capacity.}
\label{fig:ccdf}
\end{figure}
%

As a more appropriate performance metric in the presence of strict latency constraints and when rate adaption mechanisms based on CSIT are prohibitive, we now study the outage capacity of the considered channel. In particular, by considering fixed-rate block codes of maximum length $T$, we measure capacity as
\begin{align*}
C_{\mathrm{out}} := \sup_{r\in \stdset{R}} r\cdot \P\left(\log(1+\alpha P) \geq r\right).
\end{align*}
\begin{proposition}\label{prop:C_out}
The outage capacity of the considered channel with power constraint $P\in \stdset{R}_+$ per transmitter is given by
\begin{equation}\label{eq:C_out}
C_{\mathrm{out}} = \max_{i\in\{1,\ldots,L\}} \P(\alpha \geq i) \log(1+iP).
\end{equation}
\end{proposition}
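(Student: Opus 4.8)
The plan is to exploit the fact that $\alpha$ is supported on the finite set $\{0,1,\ldots,L\}$, so that the complementary cumulative distribution function (CCDF)
\[
F(r) \eqdef \P\!\left(\log(1+\alpha P)\geq r\right)
\]
appearing in the definition of $C_{\mathrm{out}}$ is a step function taking at most $L+1$ distinct values, with its jumps located at the points $\log(1+iP)$, $i\in\{0,1,\ldots,L\}$. (If $P=0$ then $C_{\mathrm{out}}=0$ equals the right-hand side of \eqref{eq:C_out} trivially, so assume $P>0$, in which case these points are strictly increasing.) First I would verify, using $\log(1+\alpha P)\geq r \iff \alpha \geq (e^r-1)/P$ and the integrality of $\alpha$, that $F(r)=1$ for $r\leq 0$, that $F(r)=\P(\alpha\geq i)$ for $r\in\bigl(\log(1+(i-1)P),\,\log(1+iP)\bigr]$ with $i\in\{1,\ldots,L\}$, and that $F(r)=0$ for $r>\log(1+LP)$.

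Next I would split the supremum defining $C_{\mathrm{out}}$ according to this partition of $\stdset{R}$ and maximize $r\mapsto rF(r)$ over each piece separately. On $(-\infty,0]$ we have $rF(r)=r\leq 0$; on $(\log(1+LP),\infty)$ we have $rF(r)=0$; and on each bounded interval $\bigl(\log(1+(i-1)P),\,\log(1+iP)\bigr]$ the product equals $r\,\P(\alpha\geq i)$, which is nondecreasing in $r$ because $\P(\alpha\geq i)\geq 0$, hence its supremum over that interval is attained at the right endpoint and equals $\P(\alpha\geq i)\log(1+iP)$. Since $\P(\alpha\geq 1)\log(1+P)\geq 0$, the two trivial pieces never exceed the bounded ones, so the overall supremum equals $\max_{i\in\{1,\ldots,L\}}\P(\alpha\geq i)\log(1+iP)$; moreover it is attained at $r=\log(1+i^\star P)$ for a maximizing index $i^\star$, so the supremum is in fact a maximum, which proves \eqref{eq:C_out}.

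I do not expect a genuine obstacle in this argument: the content is entirely in the observation that the discreteness of $\alpha$ turns the continuous optimization over $r$ into the finite maximization over $i$. The only point deserving care is the bookkeeping with the half-open intervals — specifically, noting that each interval is \emph{closed} at its right endpoint, which is precisely why the per-interval optimum is achieved there rather than merely approached, and why the restriction of the maximization to $i\in\{1,\ldots,L\}$ (as opposed to all of $\stdset{R}$) is without loss.
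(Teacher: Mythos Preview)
Your argument is correct and is essentially the same as the paper's: both hinge on the observation that $\P(\log(1+\alpha P)\geq r)$ is a decreasing step function with jumps at $\log(1+iP)$ and levels $\P(\alpha\geq i)$, so the supremum over $r$ reduces to the finite maximum over $i\in\{1,\ldots,L\}$. You have simply spelled out the piecewise bookkeeping that the paper leaves implicit.
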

\begin{proof}
The proof readily follows by observing that $\P\left(\log(1+\alpha P) \geq r\right)$ is a decreasing step function of $r$ with discontinuities located at $\log(1+iP)$ and step heights $\P(\alpha \geq i)$, for $i\in \{1,\ldots,L\}$.
\end{proof}
The quantitative difference between the ergodic and outage capacity can be visualized in Figure~\ref{fig:ccdf}, which depicts the complementary cumulative density function (CCDF) of the instantaneous capacity $\log(1+\alpha P)$. The ergodic capacity is given by the total area underlying the CCDF, while the outage capacity is given by the area $\P(\alpha \geq 1) \log(1+P)$ of the largest rectangle. 

%
%

\section{Practical transmission schemes}
\label{sec:schemes}
In this section we revisit, compare, and extend a set of practical transmission schemes that do not require the receiver to perform complex decoding algorithms. We will measure performance in terms of ergodic and outage rates, depending on the context and applicability, and discuss crucial implementation aspects. Our theoretical results and the related supporting illustrations in this section are all based on the same simplified channel model in Section~\ref{sec:model}, and they are completely characterized by the parameters $p_B$ (blockage probability), $P$ (per-transmitter SNR), and $L$ (number of transmitters).

\subsection{Transmitter selection}
As first main baseline, we consider a transmission scheme where, for each $t$th block-fading realization with at least one non-blocked transmitters, i.e., such that $\alpha_t > 0$, the network chooses one non-blocked transmitter $l^\star(\vec{\beta}_t)$ and lets $(\forall m \in \stdset{Z})$ $(\forall l \in \{1,\ldots,L\})$
\begin{equation*}
x_l[m] = \begin{cases} u[m] & \text{if } l = l^\star(\vec{\beta}_t) \text{ and } \alpha_t>0, \\ 0 & \text{otherwise,}\end{cases} \quad t=\left\lfloor \frac{m}{T}\right\rfloor,
\end{equation*}
where $u[m]\sim \CN(0,P)$ is a scalar i.i.d. information bearing signal. 
By mapping this scheme to some $\vec{Q}\in \set{Q}$ in the signal model \eqref{eq:tx_signal}, it can be readily seen that the ergodic rate 
\begin{equation*}
R = (1-p_B^L)\log(1+P)
\end{equation*} 
is achievable. Note that the concept of outage rate is not meaningful for this scheme, since the outage events are known by the network. The main advantage of transmitter selection is that it can mitigate the effect of blockage using simple scalar codes at fixed rate $\log(1+P)$. However, this simplicity comes with two major drawbacks. First, no SNR gain is provided, i.e., $R$ saturates to $\log(1+P)$ as $L$ grows. Second, causal knowledge of $\vec{\beta}_t$ at the transmitters is assumed. Alternatively, the choice of the transmitter can be performed by the receiver, but this procedure still entails the timely feedback of $l^\star(\vec{\beta}_t)$ to the network. Overall, the required resources for timely blockage estimation and network coordination may be significant, especially in case of frequent blockages.  

\subsection{Non-coherent joint transmission}
As second main baseline, we consider non-coherent joint transmission of a single scalar codeword from all transmitters simultaneously, i.e., we let $(\forall m \in \stdset{Z})$ $(\forall l\in \{1,\ldots,L\})$
\begin{equation*}
x_l[m] = u[m],
\end{equation*}
where $u[m]\sim \CN(0,P)$ is a scalar i.i.d. information bearing signal.
Using \eqref{eq:tx_signal}, we observe that this scheme achieves the ergodic rate
\begin{equation}\label{eq:R_NCJT}
R = \E[\log(1+|h|^2P)],
\end{equation}
where $h \eqdef \sum_{l=1}^L\beta_le^{j\theta_l}$ denotes an effective small-scale fading coefficient which is artificially induced by the non-coherent superposition of signals at the receiver. 

Non-coherent joint transmission always outperforms transmitter selection in terms of ergodic rates, and it provides SNR gains as $L$ grows (note that we assume a per-transmitter power constraints, hence the total power budget grows unboundedly with $L$). This trend is clearly visible in Figure~\ref{fig:R_erg}. For a formal proof, we exploit the following intuitive property, which essentially states that, on average, adding non-blocked transmitters is beneficial.
\begin{proposition}\label{prop:monotonicity}
Let $\{\theta_l\}_{l=1}^\infty$ be an i.i.d. random process with first order distribution $\mathrm{Uniform}(0,2\pi)$. Then, the sequence $\{\bar{R}(i)\}_{i=0}^\infty$ given by $(\forall i \geq 0)$ 
\begin{equation}\label{eq:barR}
\bar{R}(i) \eqdef \E\left[\log\left(1+\left|\textstyle\sum_{l=1}^ie^{j\theta_l}\right|^2P\right)\right]
\end{equation}
is strictly increasing and unbounded above, i.e., 
\begin{equation*}
(\forall i\geq 0)~\bar{R}(i+1) > \bar{R}(i), \text{ and }\lim_{i\to \infty}\bar{R}(i)= \infty.
\end{equation*} 
\end{proposition}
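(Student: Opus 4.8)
The plan is to establish the two claims separately. For the \emph{unboundedness}, I would argue that $\left|\sum_{l=1}^i e^{j\theta_l}\right|^2$ has mean $i$: indeed, expanding the square gives $\sum_{l=1}^i\sum_{k=1}^i e^{j(\theta_l-\theta_k)}$, and by independence and the fact that $\E[e^{j\theta_l}]=0$, all cross terms vanish, leaving $\E\left[\left|\sum_{l=1}^ie^{j\theta_l}\right|^2\right]=i$. Since $x\mapsto\log(1+xP)$ is concave, Jensen gives an \emph{upper} bound $\bar R(i)\le\log(1+iP)$, which is the wrong direction; instead I would obtain a lower bound by restricting the expectation to the event that the random walk magnitude is at least, say, $\sqrt{i}/2$, whose probability is bounded below by a positive constant uniformly in $i$ (e.g.\ via the Paley–Zygmund inequality applied to the nonnegative random variable $\left|\sum_l e^{j\theta_l}\right|^2$, using that its second moment is $O(i^2)$ while its mean is $i$). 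This yields $\bar R(i)\ge c\,\log(1+iP/4)\to\infty$.

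For \emph{strict monotonicity}, the key observation is a conditioning argument. Write $S_i\eqdef\sum_{l=1}^i e^{j\theta_l}$, so $|S_{i+1}|^2=|S_i|^2+1+2\,\mathrm{Re}(S_i e^{-j\theta_{i+1}})$. Conditioning on $S_i$ and taking expectation over $\theta_{i+1}\sim\mathrm{Uniform}(0,2\pi)$, which is independent of $S_i$, the cross term has conditional mean zero, so $\E[|S_{i+1}|^2\mid S_i]=|S_i|^2+1$. Now I would apply Jensen's inequality to the concave function $g(x)\eqdef\log(1+xP)$ \emph{conditionally} on $S_i$:
\begin{equation*}
\E[g(|S_{i+1}|^2)\mid S_i]\le g\bigl(\E[|S_{i+1}|^2\mid S_i]\bigr)=g(|S_i|^2+1),
\end{equation*}
which again points the wrong way. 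The correct route for a \emph{strict lower} bound is instead to note $g(|S_i|^2+1)>g(|S_i|^2)$ pointwise (since $g$ is strictly increasing) and to compare $\E[g(|S_{i+1}|^2)]$ against $\E[g(|S_i|^2)]$ directly. I would use that, for fixed $s=|S_i|$, the map $\theta\mapsto g\bigl(|s e^{j\phi}+e^{j\theta}|^2\bigr)=g(s^2+1+2s\cos\theta)$ (absorbing the phase of $S_i$ into a shift, which leaves the uniform law invariant) has average over $\theta$ strictly exceeding $g(s^2)$. This last inequality is the analytic heart: one shows $\frac1{2\pi}\int_0^{2\pi} g(s^2+1+2s\cos\theta)\,d\theta>g(s^2)$ for every $s\ge 0$, which is intuitively clear because adding a unit-modulus phasor, on average, strictly increases $\log(1+|\cdot|^2P)$ — the ``gain'' when the phasor aligns outweighs the ``loss'' when it opposes, precisely because $g$ is concave but not affine, and because the increment in squared magnitude has strictly positive conditional mean. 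Taking expectation over $s=|S_i|$ then gives $\bar R(i+1)>\bar R(i)$, with strictness inherited from strictness at every $s$ (the case $s=0$ already gives $g(1)>g(0)$).

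The main obstacle I anticipate is proving the single-variable inequality $\frac1{2\pi}\int_0^{2\pi} g(s^2+1+2s\cos\theta)\,d\theta>g(s^2)$ cleanly. A slick way is to combine two facts: (i) by the zero-mean cross term, $\E_\theta[s^2+1+2s\cos\theta]=s^2+1$, and (ii) $g$ is concave, so Jensen gives an upper bound $g(s^2+1)$, not a lower one — hence I cannot use Jensen directly and must exploit a more refined property. One workable approach: write $g(s^2+1+2s\cos\theta)=g(s^2)+\int_{0}^{1+2s\cos\theta} g'(s^2+u)\,du$ when $1+2s\cos\theta\ge 0$ and handle the (smaller) range where it is negative separately, then show the average of the signed increments is positive using that $g'$ is positive and decreasing together with the symmetry $\cos\theta\leftrightarrow-\cos\theta$; concavity makes the positive-$u$ contributions dominate. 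Alternatively, and perhaps most cleanly, I would invoke the known fact (used implicitly throughout this literature) that $i\mapsto\E[\log(1+|S_i|^2P)]$ is the ergodic rate of non-coherent combining and is strictly increasing because a genuinely new independent phasor strictly enlarges the set of received signal realizations in a way that strictly helps a concave utility; but for a self-contained paper the explicit integral estimate above is the safest. The unboundedness step, by contrast, is routine once the second-moment computation and a Paley–Zygmund-type anticoncentration bound are in hand.
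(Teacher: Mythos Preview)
Your overall architecture matches the paper's: condition on $S_i=\sum_{l\le i}e^{j\theta_l}$, reduce strict monotonicity to the single-variable inequality
\[
\frac{1}{2\pi}\int_0^{2\pi}\log\bigl(1+(s^2+1+2s\cos\theta)P\bigr)\,d\theta \;>\; \log(1+s^2P)\qquad(\forall s\ge 0),
\]
and then take expectation over $s=|S_i|$. Your unboundedness argument via Paley--Zygmund is valid and parallel in spirit to the paper's, which instead invokes the CLT to get $\P(|S_i|^2/i\ge 1)\to e^{-1}$ and hence $\liminf_i \bar R(i)/\log(1+iP)\ge e^{-1}$.

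The gap is precisely where you anticipate it: you identify the key inequality but do not prove it, and your sketched ``symmetry $+$ concavity'' route does not work as stated. If you pair $\cos\theta\leftrightarrow-\cos\theta$ and examine the paired integrand at $|\cos\theta|=1$, you are asking whether $\tfrac12[g((s+1)^2)+g((s-1)^2)]\ge g(s^2)$ with $g(x)=\log(1+xP)$; a direct expansion shows this difference has the sign of $2P+(1-2s^2)P^2$, which is \emph{negative} whenever $s^2>1/2+1/P$. So the paired contribution is not pointwise nonnegative, and since the density of $|\cos\theta|$ even concentrates near $1$, the heuristic ``concavity makes the positive-$u$ contributions dominate'' is not an argument. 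Your fallback (``invoke the known fact'') is not a proof either.

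The paper closes this gap by computing the integral exactly. Using the identity $\int_0^\pi\log(1+c\cos\theta)\,d\theta=\pi\log\tfrac{1+\sqrt{1-c^2}}{2}$ for $|c|<1$, one obtains
\[
\E_\theta\bigl[\log(1+|x+e^{j\theta}y|^2)\bigr]=\log\frac{1+a+\sqrt{(1+a)^2-b^2}}{2},\qquad a=|x|^2+|y|^2,\quad b=2|x||y|,
\]
and then checks by elementary calculus that this closed form is strictly increasing in $|y|^2$ for every fixed $x$ (indeed $(1+a)^2-b^2=(|x|^2-1-|y|^2)^2+4|x|^2$, from which positivity of the derivative is immediate). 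Comparing $|y|^2=P$ against $|y|^2=0$ with $x=\sqrt{P}\,S_i$ yields exactly the single-variable inequality you need, pointwise in $s$. Once you have this lemma, the rest of your argument goes through unchanged.
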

\begin{proof}
The proof is given in Appendix~\ref{app:monotonicity}. 
\end{proof}
The gains of non-coherent joint transmission can be then formalized as follows.
\begin{proposition}\label{prop:R_NCJT}
The ergodic rate in \eqref{eq:R_NCJT} achieved by non-coherent joint transmission satisfies: 
\begin{enumerate}[label = (\roman*)]
\item $(\forall L\geq 1)~R \geq (1-p_B^L)\log(1+P)$;
\item $R$ is a strictly increasing sequence in $L$;
\item $\lim_{L\to \infty} R = \infty$.
\end{enumerate}
\end{proposition}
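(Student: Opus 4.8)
The plan is to prove the three claims in order, relying heavily on Proposition~\ref{prop:monotonicity} and on conditioning on the blockage pattern $\vec{\beta}$. The key observation is that, conditioned on $\vec{\beta}$ with $\alpha = \sum_l \beta_l$ non-blocked transmitters, the effective fading coefficient $h = \sum_{l=1}^L \beta_l e^{j\theta_l}$ is equal in distribution to $\sum_{k=1}^{\alpha} e^{j\theta_k}$, because the phases are i.i.d.\ $\mathrm{Uniform}(0,2\pi)$ and the blocked terms simply drop out. Hence, by the law of total expectation,
\begin{equation*}
R = \E\!\left[\log\!\left(1+|h|^2 P\right)\right] = \E_{\alpha}\!\left[\bar{R}(\alpha)\right],
\end{equation*}
where $\bar{R}(\cdot)$ is the function defined in \eqref{eq:barR} and $\alpha \sim \text{Binomial}(L,1-p_B)$. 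This identity is the workhorse for all three parts.

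For part (i), I would note that $\bar{R}$ is increasing by Proposition~\ref{prop:monotonicity}, so $\bar{R}(\alpha) \geq \bar{R}(1) = \log(1+P)$ whenever $\alpha \geq 1$, while $\bar{R}(0) = 0$. Taking expectations gives $R \geq \P(\alpha \geq 1)\log(1+P) = (1-p_B^L)\log(1+P)$, which recovers exactly the ergodic rate of transmitter selection. For part (iii), the same lower bound combined with $\lim_{i\to\infty}\bar{R}(i) = \infty$ is not quite enough on its own, so instead I would argue directly: since $\bar{R}$ is increasing and unbounded, for any $M$ pick $i_0$ with $\bar{R}(i_0) > M$; then $R = \E[\bar{R}(\alpha)] \geq \bar{R}(i_0)\,\P(\alpha \geq i_0) \geq M\,\P(\alpha \geq i_0)$, and since $\alpha \sim \text{Binomial}(L,1-p_B)$ we have $\P(\alpha \geq i_0) \to 1$ as $L \to \infty$ (for fixed $i_0$), giving $\liminf_{L\to\infty} R \geq M$ for every $M$, hence divergence.

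Part (ii), monotonicity in $L$, is the one requiring the most care, and I expect it to be the main obstacle. The natural approach is a stochastic-dominance / coupling argument: couple $\alpha^{(L)}$ and $\alpha^{(L+1)}$ on a common probability space by writing $\alpha^{(L+1)} = \alpha^{(L)} + \beta_{L+1}$ with $\beta_{L+1} \sim \text{Bernoulli}(1-p_B)$ independent of $\alpha^{(L)}$; then $\alpha^{(L+1)} \geq \alpha^{(L)}$ pointwise. Since $\bar{R}$ is nondecreasing, $\bar{R}(\alpha^{(L+1)}) \geq \bar{R}(\alpha^{(L)})$ pointwise, hence $R^{(L+1)} = \E[\bar{R}(\alpha^{(L+1)})] \geq \E[\bar{R}(\alpha^{(L)})] = R^{(L)}$. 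To get \emph{strict} inequality one uses the strictness in Proposition~\ref{prop:monotonicity}: on the event $\{\beta_{L+1} = 1\}$, which has positive probability $1-p_B > 0$ (since $q>0$ forces $p_B<1$), we have $\alpha^{(L+1)} = \alpha^{(L)}+1 > \alpha^{(L)}$ and therefore $\bar{R}(\alpha^{(L+1)}) > \bar{R}(\alpha^{(L)})$ strictly; combined with the pointwise inequality on the complementary event, this yields strict inequality in expectation. The only subtlety to check is that $\alpha^{(L)}$ can take values in $\{0,1,\ldots,L\}$ with positive probability so that the strict jump is not on a null set, which follows since $1-p_B \in (0,1)$.
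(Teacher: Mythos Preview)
Your proposal is correct. The workhorse identity $R=\E[\bar R(\alpha)]$ and the proofs of parts~(i) and~(iii) match the paper's argument essentially verbatim. The only genuine difference is in part~(ii): the paper rewrites $\E[\bar R(\alpha_L)]=\sum_{i=1}^L \P(\alpha_L\geq i)\,\Delta\bar R(i)$ via a tail-sum (Abel) identity and then uses the survival-function ordering $\P(\alpha_L\geq i)\geq \P(\alpha_{L-1}\geq i)$, isolating the strictly positive remainder $\P(\alpha_L\geq L)\,\Delta\bar R(L)$; you instead use the explicit coupling $\alpha^{(L+1)}=\alpha^{(L)}+\beta_{L+1}$ and read off strict inequality on the positive-probability event $\{\beta_{L+1}=1\}$. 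Both are encodings of the same stochastic dominance $\alpha^{(L)}\preceq\alpha^{(L+1)}$, but your coupling version is arguably more direct since it localizes the strict gain on a single event rather than through a telescoping term. One minor remark: your appeal to $q>0$ to ensure $1-p_B>0$ is not something the paper explicitly states (only $p,q<1$), but the paper's own proof of~(ii) silently needs the same non-degeneracy to make $\P(\alpha_L=L)>0$, so this is not a gap relative to the paper.
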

\begin{proof}
The proof is given in Appendix~\ref{app:R_NCJT}.
\end{proof}
However, these potential gains may be very difficult to achieve in practice. In fact, although no CSIT is theoretically needed, practical coding schemes based on rate adaptation or HARQ mechanisms need to track and adapt to the instantaneous rate fluctuations $\log(1+|h|^2P)$. Unfortunately, the instantaneous rate $\log(1+|h|^2P)$ fluctuates at a much faster pace and over a much larger dynamic range than for the case of transmitter selection, where the fluctuations are driven by the blockage process only. Therefore, these mechanisms may consume a significantly larger amount of resources (e.g., feedback) than in the case of transmitter selection. As a consequence, despite the higher ergodic rates, non-coherent joint transmission may not be preferable over transmitter selection from a overall system engineering perspective.

To study the performance of a more practical non-adaptive low-latency implementation of non-coherent joint transmission, we consider the outage rate
\begin{equation*}
R_{\mathrm{out}} = \sup_{r\in \stdset{R}} r\cdot \P\left(\log(1+|h|^2 P) \geq r\right).
\end{equation*}
The main advantage of this implementation is that it requires minimal coordination effort at the network side. However, our numerical results show that the outages due to the artificially induced small-scale fading can be significant, and that there are non-trivial dependencies on the system parameters. In particular, by comparing Figure~\ref{fig:R_out} and Figure~\ref{fig:R_erg}, we can see that $R_{\mathrm{out}}$ is well below $(1-p_B^L)\log(1+P)$ (i.e., the performance of transmitter selection) in most cases of interest. Hence, the simplicity of this non-coherent joint transmission implementation may come at a significant price in terms of performance, which is satisfactory for very specific cases only, such as the case of a very large number of transmitters. 

\subsection{Phase diversity}
To address the limitations of non-coherent joint transmission, in this section we propose an extension (based on so-called \textit{phase} diversity \cite{dammann2002low}) that mitigates the detrimental impact of the artificially induced small-scale fading. The main idea is to induce a controlled fast-fading regime for escaping deep fading bursts. We split each fading block of length $T$ into an integer number $M = T/K$ of frames composed by $K$ symbols each. For convenience, we rearrange the time-domain channel model onto a grid model (similar to an OFDM grid) as follows: $(\forall m \in \stdset{Z})$ $(\forall k\in \{1,\ldots,K\})$ 
\begin{equation}\label{eq:model_phase_div}
y[m,k] = \sum_{l=1}^Lh_{l,t}x_l[m,k] +z[m,k],\quad t=\left\lfloor \frac{m}{M}\right\rfloor. 
\end{equation}
We then let $(\forall m \in \stdset{Z})$ $(\forall k\in \{1,\ldots,K\})$ $(\forall l \in \{1,\ldots,L\})$
\begin{equation}\label{eq:signal_phase_div}
x_l[m,k] = e^{j\phi_{l,k}}u[m,k],
\end{equation}
where $u[m,k]\sim \CN(0,P)$ is a scalar i.i.d. information bearing signal, and where $(\forall l \in\{1,\ldots,L\})$ $(\phi_{l,1},\ldots,\phi_{l,K})$ is a vector of random phases independently and uniformly distributed in $[0,2\pi]$, which is shared among the transmitters and receiver as a common source of randomness. Provided that the number of frames $M$ is large enough, standard arguments (as the ones used, e.g., for coding in OFDM systems) show that the following outage rate is achievable:
\begin{equation*}
R_{\mathrm{out}} = \sup_{r\in \stdset{R}} r\cdot \P\left(\dfrac{1}{K}\sum_{k=1}^K\log(1+|h[k]|^2P) \geq r\right),
\end{equation*}
where $h[k] \eqdef \sum_{l=1}^L\beta_l e^{j(\theta_l+\phi_{l,k})}$ denotes an effective small-scale fading coefficient that fluctuates within the same frame of $K$ symbols, according to the given vector of random phases. The proposed extension is motivated by the following asymptotic property:

\begin{proposition}\label{prop:lln} For all $i\in \{0,\ldots,L\}$, let $\bar{R}(i)$ as in \eqref{eq:barR}. Then, $(\forall \epsilon > 0)$
\begin{align*}
\lim_{K\to \infty}\P\left(\left|\dfrac{1}{K}\sum_{k=1}^{K}\log(1+|h[k]|^2P)-\bar{R}(\alpha) \right|\geq \epsilon\right)=0,
\end{align*}
i.e., the instantaneous rate converges in probability to $\bar{R}(\alpha)$ as $K$ grows large.
\end{proposition}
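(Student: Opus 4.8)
The plan is to reduce Proposition~\ref{prop:lln} to a conditional weak law of large numbers. The key observation is that, once the current block-fading state is frozen, the shared random phases $\{\phi_{l,k}\}$ render the per-frame terms $\log(1+|h[k]|^2P)$ independent and identically distributed across the frame index $k$, with a conditional mean equal to $\bar{R}(\alpha)$.

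First I would condition on the realization of the block-fading state $(\vec{\beta},\vec{\theta})$. Since $(\phi_{l,1},\ldots,\phi_{l,K})$ has i.i.d.\ $\mathrm{Uniform}(0,2\pi)$ entries, independently across transmitters $l$ and frames $k$, the coefficients $\{h[k]\}_{k=1}^K$ are, conditionally on $(\vec{\beta},\vec{\theta})$, i.i.d. Moreover, for every non-blocked transmitter ($\beta_l=1$) the phase $\theta_l+\phi_{l,k}$ reduced modulo $2\pi$ is again $\mathrm{Uniform}(0,2\pi)$, and these reduced phases are mutually independent across $l$; hence, with $\alpha=\sum_{l=1}^L\beta_l$, we obtain the distributional identity
\begin{equation*}
h[k] \;\big|\; (\vec{\beta},\vec{\theta}) \;\stackrel{d}{=}\; \textstyle\sum_{l=1}^{\alpha} e^{j\vartheta_l}, \qquad \{\vartheta_l\}\ \text{i.i.d.}\ \mathrm{Uniform}(0,2\pi).
\end{equation*}
Comparing with \eqref{eq:barR}, this gives $\E[\log(1+|h[k]|^2P)\mid\vec{\beta},\vec{\theta}]=\bar{R}(\alpha)$, which depends on $(\vec{\beta},\vec{\theta})$ only through $\alpha$ (the case $\alpha=0$ being trivial, since then $h[k]=0$ and $\bar{R}(0)=0$).

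Next I would use that the summands are uniformly bounded, $0\le \log(1+|h[k]|^2P)\le \log(1+L^2P)$, so they have finite conditional variance; Chebyshev's inequality applied to the i.i.d.\ sequence $\{\log(1+|h[k]|^2P)\}_{k=1}^K$ then yields, for every $\epsilon>0$ and every $i\in\{0,\ldots,L\}$,
\begin{equation*}
\P\!\left(\left|\tfrac1K\textstyle\sum_{k=1}^K\log(1+|h[k]|^2P)-\bar{R}(\alpha)\right|\ge\epsilon \ \Big|\ \alpha=i\right) \;\le\; \frac{\log^2(1+L^2P)}{K\epsilon^2} \;\longrightarrow\; 0 \quad \text{as } K\to\infty.
\end{equation*}

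Finally I would remove the conditioning via the law of total probability: since $\alpha$ takes only the finitely many values $\{0,\ldots,L\}$, the unconditional probability equals the finite convex combination $\sum_{i=0}^L \P(\alpha=i)\,\P(\,\cdot \mid \alpha=i)$ of terms that each vanish as $K\to\infty$, which is the claim. There is essentially no obstacle here: the only step that requires a little care is the distributional identity above --- checking that the shared randomization $\phi_{l,k}$ turns the otherwise arbitrary and fixed phases of the non-blocked links into an i.i.d.\ uniform collection, so that the conditional mean is precisely $\bar{R}(\alpha)$ and carries no residual dependence on $\vec{\theta}$. Everything else is the standard ``coding over many independent fading frames'' argument already invoked to justify achievability of the stated outage rate.
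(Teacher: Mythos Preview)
Your proposal is correct and follows essentially the same route as the paper's proof: condition on the block-fading state so that the per-frame rates become i.i.d.\ with conditional mean $\bar{R}(\alpha)$, apply the weak law of large numbers, and then remove the conditioning. The only cosmetic differences are that you make the Chebyshev bound and the distributional identity explicit, and you pass to the limit via a finite sum over the values of $\alpha$ rather than invoking dominated convergence on the conditioning variables $(\vec{\beta},\vec{\theta})$.
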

\begin{proof}
The proof is given in Appendix~\ref{proof:lln}. 
\end{proof}

\begin{figure}[ht!]
\centering
\includegraphics[width=0.9\linewidth]{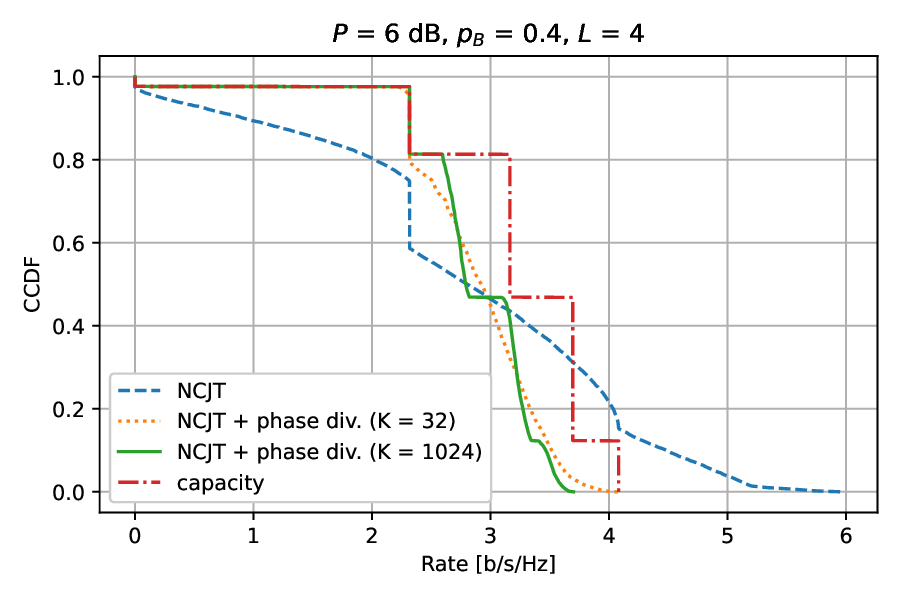}
\caption{CCDF of the instantaneous capacity, and of the instantaneous rate of non-coherent joint transmission (NCJT) with and without phase diversity, for a given choice of the parameters $(p_B,P,L)$ of the simplified channel model in Section~\ref{sec:model}.}
\label{fig:ccdf_NCJT_phasediv}
\end{figure}

The above proposition suggests that the proposed extension of non-coherent joint transmission can reduce the effect of the artificially induced small-scale fading, and make the instantaneous rate fluctuations be essentially driven by the aggregate blockage process $\alpha$. This effect is clearly visible in Figure~\ref{fig:ccdf_NCJT_phasediv}, where the CCDF of the instantaneous rate approaches $\P\left(\bar{R}(\alpha)\geq r\right)$, i.e., a step-wise function similar to the CCDF of the instantaneous capacity $\log(1+\alpha P)$, as $K$ grows. The  difference between these two functions is a Jensen's penalty for the discontinuities indexed by $i\geq 2$, since $(\forall i\geq 1)$ $\bar{R}(i) \leq  \log(1+ iP)$ holds, with equality for $i=1$. 

An important consequence of this behavior is that, in contrast to simple non-coherent joint transmission, the outage rate is (asymptotically) larger or equal than the ergodic rate achieved by transmitter selection. This property can be easily formalized as follows:
\begin{proposition}
Let $\bar{R}_{\mathrm{out}}= \sup_{r\in \stdset{R}}r\cdot\P(\bar{R}(\alpha)\geq r)$ be the outage rate asymptotically achievable by the proposed extension of non-coherent joint transmission. Then:
\begin{equation*}
\bar{R}_{\mathrm{out}} = \max_{i\in\{1,\ldots,L\}} \P(\alpha \geq i) \bar{R}(i).
\end{equation*}
\end{proposition}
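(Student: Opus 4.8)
The plan is to mimic the proof of Proposition~\ref{prop:C_out}, the key additional ingredient being the strict monotonicity of $\bar R$ established in Proposition~\ref{prop:monotonicity}. First I would note that $\alpha$ takes values in $\{0,1,\ldots,L\}$ and that, by the definition~\eqref{eq:barR} applied to an empty sum, $\bar R(0)=\log 1=0$; hence $\bar R(\alpha)$ is a discrete random variable supported on $\{0,\bar R(1),\ldots,\bar R(L)\}$.

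Next I would compute the CCDF $r\mapsto\P(\bar R(\alpha)\geq r)$ explicitly. Because Proposition~\ref{prop:monotonicity} gives $0=\bar R(0)<\bar R(1)<\cdots<\bar R(L)$, the event $\{\bar R(\alpha)\geq r\}$ coincides with $\{\alpha\geq i\}$ for every $r\in(\bar R(i-1),\bar R(i)]$ and $i\in\{1,\ldots,L\}$; it equals the whole sample space for $r\leq 0$ and the empty set for $r>\bar R(L)$. Thus $r\mapsto\P(\bar R(\alpha)\geq r)$ is a nonincreasing step function, equal to the constant $\P(\alpha\geq i)$ on $(\bar R(i-1),\bar R(i)]$, with discontinuities located at the points $\bar R(i)$.

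It then remains to maximize $g(r):=r\cdot\P(\bar R(\alpha)\geq r)$ over $r\in\stdset{R}$. On each interval $(\bar R(i-1),\bar R(i)]$ the function $g$ is affine in $r$ with nonnegative slope $\P(\alpha\geq i)$, so its supremum over that interval is attained at the right endpoint and equals $\bar R(i)\,\P(\alpha\geq i)$. For $r\leq 0$ we have $g(r)\leq 0$, and for $r>\bar R(L)$ we have $g(r)=0$, so neither range can beat the strictly positive value $\bar R(1)\,\P(\alpha\geq 1)$ obtained at $r=\bar R(1)$ (positivity of $\P(\alpha\geq 1)=1-p_B^L$ and of $\bar R(1)=\log(1+P)$ being immediate). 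Taking the maximum over the finitely many candidate endpoints yields $\bar R_{\mathrm{out}}=\max_{i\in\{1,\ldots,L\}}\P(\alpha\geq i)\,\bar R(i)$.

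I do not anticipate a real obstacle here beyond careful bookkeeping; the one substantive point is that the \emph{strict} monotonicity of $\bar R$ from Proposition~\ref{prop:monotonicity} is precisely what collapses the level sets of $\bar R(\alpha)$ onto the tail events $\{\alpha\geq i\}$, and thus what identifies the step heights of the CCDF with $\P(\alpha\geq i)$. If $\bar R$ were only nondecreasing one would merely have to group coincident values before reading off the step heights, and the final expression would be unaffected.
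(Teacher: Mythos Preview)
Your proposal is correct and follows essentially the same approach as the paper, which simply states ``Same steps as in the proof of Proposition~\ref{prop:C_out}.'' You have made explicit the one point the paper leaves implicit: that the strict monotonicity of $\bar R$ from Proposition~\ref{prop:monotonicity} is what ensures the CCDF of $\bar R(\alpha)$ has exactly the step heights $\P(\alpha\geq i)$ at the locations $\bar R(i)$, mirroring the role played by the strict monotonicity of $i\mapsto\log(1+iP)$ in Proposition~\ref{prop:C_out}.
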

\begin{proof} 
Same steps as in the proof of Proposition~\ref{prop:C_out}.
\end{proof}

\begin{figure*}
\centerline{\subfigure{\includegraphics[width=2.5in]{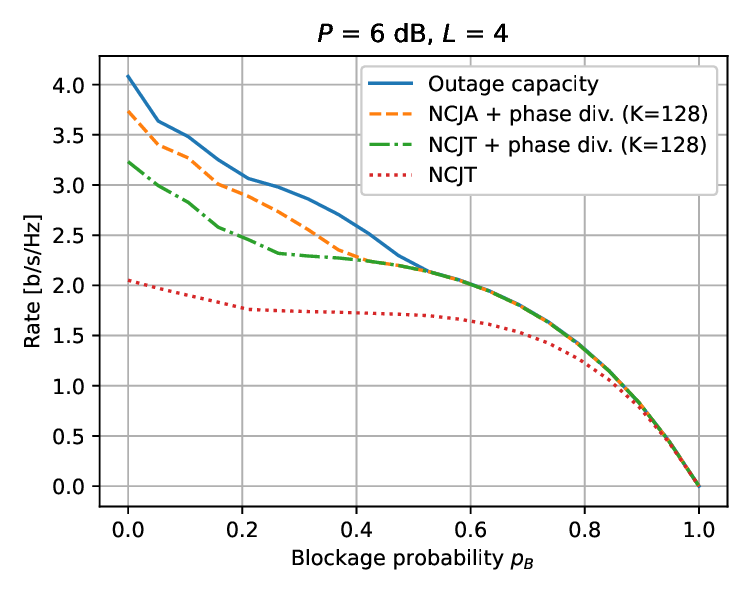}
\label{fig:R_out_vs_pB}}
\hfil
\subfigure{\includegraphics[width=2.5in]{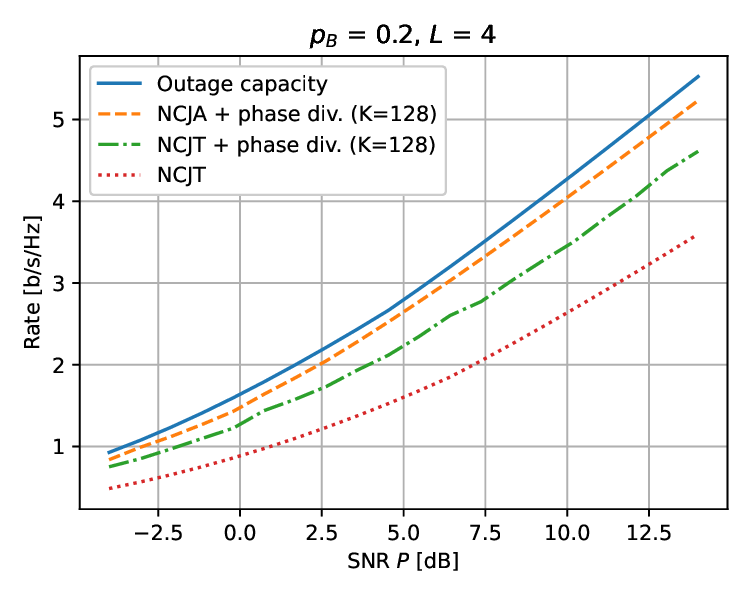}
\label{fig:R_out_vs_SNR}}
\hfil
\subfigure{\includegraphics[width=2.5in]{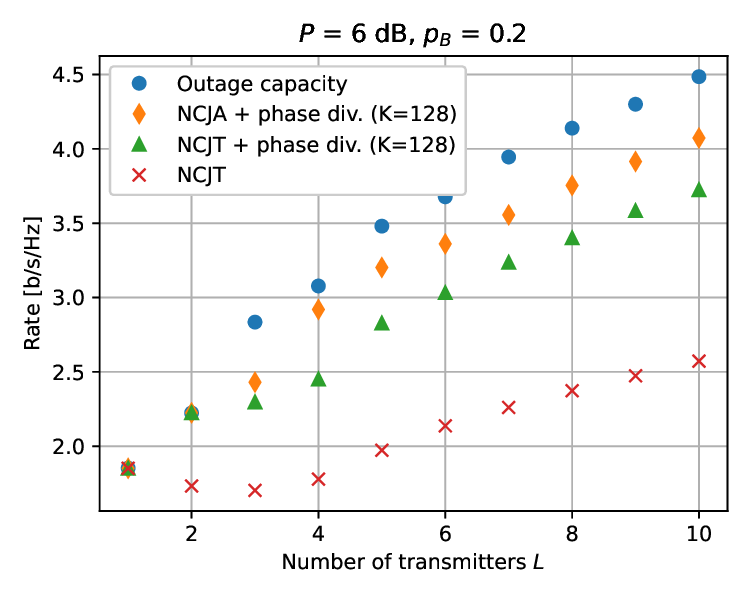}
\label{fig:R_out_vs_L}}}
\caption{Outage rate achieved by non-coherent joint transmission (NCJT) without phase diversity, NCJT with phase diversity, non-coherent joint Alamouti space-time coding (NCJA) with phase diversity, and capacity achieving schemes, by varying in each plot one of the three parameters $(p_B,P,L)$ of the simplified channel model in Section~\ref{sec:model}.} 
\label{fig:R_out}
\end{figure*}

\begin{corollary}
The following inequality holds:
\begin{equation*}
\bar{R}_{\mathrm{out}} \geq \P(\alpha \geq 1) \bar{R}(1) = (1-p_B^L)\log(1+P). 
\end{equation*}
\end{corollary}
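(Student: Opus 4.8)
The plan is to obtain the inequality directly from the closed-form expression for $\bar{R}_{\mathrm{out}}$ established in the preceding proposition, namely $\bar{R}_{\mathrm{out}} = \max_{i\in\{1,\ldots,L\}} \P(\alpha \geq i) \bar{R}(i)$. Since a maximum over $i\in\{1,\ldots,L\}$ is lower bounded by the value of the objective at any particular index, evaluating at $i=1$ immediately yields $\bar{R}_{\mathrm{out}} \geq \P(\alpha \geq 1)\,\bar{R}(1)$. This is the only genuine step; what remains is to identify the two factors with the claimed quantities.

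First I would evaluate $\bar{R}(1)$ using definition \eqref{eq:barR}: with a single term, $\left|\sum_{l=1}^1 e^{j\theta_l}\right|^2 = |e^{j\theta_1}|^2 = 1$ deterministically, so $\bar{R}(1) = \E[\log(1+P)] = \log(1+P)$. (Note this is the $i=1$ equality case of the bound $\bar{R}(i)\leq \log(1+iP)$ already remarked upon in the text.) Second, I would recall the stationary-distribution identity derived in Section~\ref{sec:model}, $\P(\alpha \geq 1) = \P(\vec{\beta}\neq\vec{0}) = 1-p_B^L$. Substituting both into the lower bound gives $\bar{R}_{\mathrm{out}} \geq (1-p_B^L)\log(1+P)$, which is exactly the assertion, and also exposes the right-hand side as the ergodic rate of transmitter selection quoted earlier.

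There is essentially no obstacle here: the corollary is a one-line consequence of the max-formula plus two elementary substitutions. The only thing to be careful about is making explicit that $\bar{R}(1)$ coincides with $\log(1+P)$ (so that the bound matches the transmitter-selection rate verbatim) and that $1\in\{1,\ldots,L\}$ for all $L\geq 1$, so the index $i=1$ is always admissible in the maximization.
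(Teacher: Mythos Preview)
Your proposal is correct and matches the paper's implicit reasoning: the corollary is an immediate consequence of the preceding closed-form $\bar{R}_{\mathrm{out}}=\max_i \P(\alpha\geq i)\bar{R}(i)$ evaluated at $i=1$, together with $\bar{R}(1)=\log(1+P)$ and $\P(\alpha\geq 1)=1-p_B^L$. The paper does not spell out a proof precisely because it is this one-line specialization.
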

Despite the asymptotic nature of the above property, our numerical results in Figure~\ref{fig:R_out} and Figure~\ref{fig:R_erg} show that similar conclusions may hold also for moderate values of $K$. Therefore, at the price of a slight increase in channel coding complexity (comparable to OFDM systems), the proposed extension can achieve the same or better effective rates as transmitter selection, without requiring fast network coordination and rate adaptation mechanisms based on CSIT. 

In terms of ergodic rates, the proposed extension achieves
\begin{equation*}
R = \E\left[\dfrac{1}{K}\sum_{k=1}^K\log(1+|h[k]|^2P)\right] = \E[\log(1+|h|^2P)],
\end{equation*}
that is, it does not provide any performance gain with respect to simple non-coherent joint transmission. However, the proposed extension is still very useful for the ergodic regime, since it significantly reduces the required complexity for approaching the ergodic rate via rate adaptation. Essentially, for large enough $K$, the encoder needs only to choose one out of $L$ coding rates, based on causal knowledge of the realizations $\alpha$ of the aggregate blockage process, as for the capacity achieving scheme. Similarly, the reduced fluctuations may significantly simplify the design of HARQ schemes. 

\begin{remark}
The proposed extension is based on the so-called phase diversity scheme \cite{dammann2002low}, which can be interpreted as a generalization of other known forms of diversity. In particular, the specific realization $(\forall l \in\{1,\ldots,L\})$ $(\forall k\in \{1,\ldots,K\})$ $\phi_{l,k} = 2\pi (k-1)d_l/K$ for some $d_l \in \stdset{N}$ gives the same performance and grid model in \eqref{eq:model_phase_div} as the so-called cyclic delay diversity scheme \cite{dammann2002low}, which is implemented by applying at each $l$th transmitter a cyclic shift $d_l$ to each frame of $K$ symbols, and by performing frequency-domain processing. This effectively corresponds to introducing frequency diversity, although the original channel is not frequency selective. We remark that this form of diversity have been mostly studied under classical multi-antenna fading models with rich scattering, such as the i.i.d. Rayleigh fading model, for which the effect of small-scale fading cannot be significantly mitigated by means of phase rotations at each antenna, as in this study.   
\end{remark}

\begin{figure*}
\centerline{\subfigure{\includegraphics[width=2.5in]{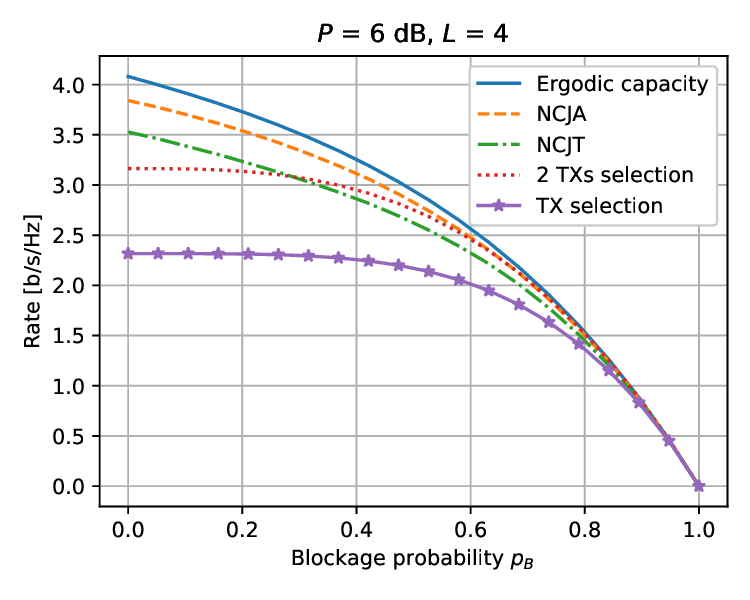}
\label{fig:R_erg_vs_pB}}
\hfil
\subfigure{\includegraphics[width=2.5in]{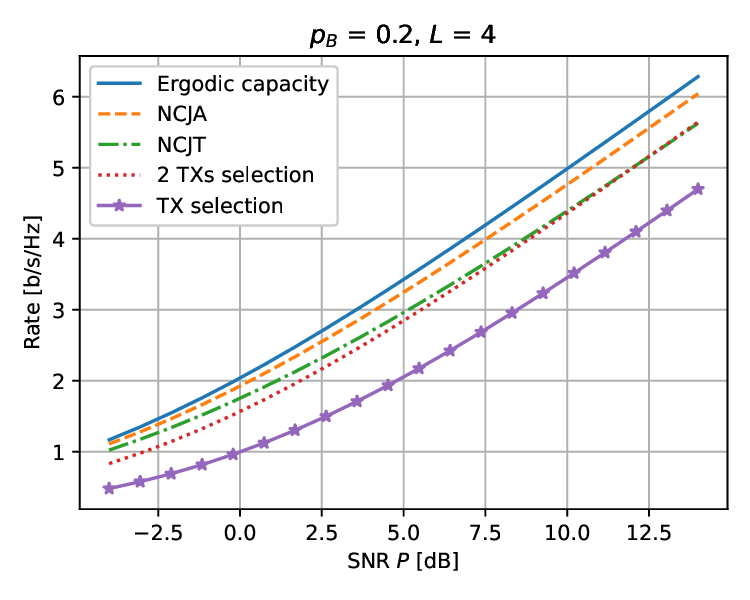}
\label{fig:R_erg_vs_SNR}}
\hfil
\subfigure{\includegraphics[width=2.5in]{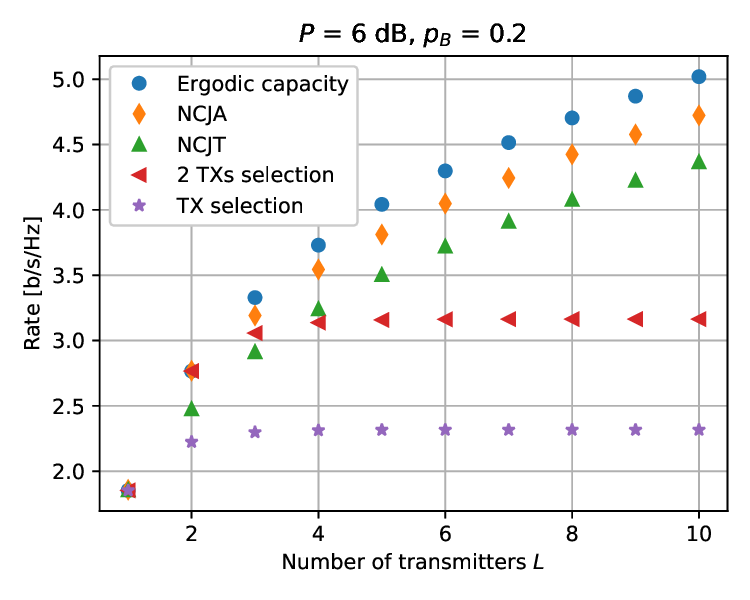}
\label{fig:R_erg_vs_L}}}
\caption{Ergodic rate achieved by transmitter selection, non-coherent joint transmission (NCJT), two transmitters selection, non-coherent joint Alamouti space-time coding (NCJA), and capacity achieving schemes, by varying in each plot one of the three parameters $(p_B,P,L)$ of the simplified channel model in Section~\ref{sec:model}. NCJT and NCJA can be with or without phase diversity.}
\label{fig:R_erg}
\end{figure*}

\subsection{Space-time coding}
For the special case of $L=2$ transmitters, the well-known Alamouti space-time block coding scheme \cite{alamouti1998} achieves both the ergodic and outage capacity by using simple linear receiver processing and scalar decoding, since it converts the original channel model into the effective single-input single-output model $(\forall m \in \stdset{Z})$
\begin{align*}
y[m] 
&= \sqrt{\alpha_t}u[m]+z[m], \quad t=\left\lfloor \frac{m}{T}\right\rfloor,
\end{align*}
where $u[m]$ is a scalar information bearing signal, which we assume i.i.d. $\CN(0,P)$. Unfortunately, the remarkable performance and simplicity of the Alamouti scheme cannot be extended to $L>2$ transmitters \cite{tarok1999space}. However, space-time coding schemes for $L>2$ transmitters with reasonable complexity-performance trade-off are still worthy of investigation. One possibility would be to revisit the available studies on some special classes of high-rate low-complexity space-time codes, such as quasi-orthogonal space-time block codes \cite{papadias2003capacity}, linear dispersion codes \cite{hassibi2002linear}, or similar alternatives, in light of the intermittent block fading model considered in this study. However, we leave this line of research for future work, and focus on the following simple enhancements of the transmitter selection and phase diversity schemes by means of an Alamouti space-time coding stage.

\subsubsection{Two transmitters selection} For each $t$th block-fading realization with at least one non-blocked transmitters, i.e., such that $\vec{\beta}_t\neq \vec{0}$, the network chooses a pair of transmitters $(l^{\star},l'^{\star})(\vec{\beta}_t)$ such that $(\beta_{l^\star,t},\beta_{l'^\star,t})\neq (0,0)$, and let them transmit the same scalar information bearing signal using Alamouti space-time coding. The other transmitters remain silent, as in the transmitter selection scheme. Similar to the case of $L=2$ transmitters, linear receiver processing converts the original channel model into the effective single-input single-output channel model ($\forall m \in \stdset{Z}$)
\begin{align*}
y[m] &= \sqrt{\min(2,\alpha_t)}u[m]+z[m], \quad t=\left\lfloor \frac{m}{T}\right\rfloor,
\end{align*}
where $u[m]$ is a scalar information bearing signal, which we assume i.i.d. $\CN(0,P)$. The following ergodic rate is achievable using standard techniques: 
\begin{align*}
R &= \P(\alpha \geq 2)\log(1+2P) + \P(\alpha = 1)\log(1+P). 
\end{align*} 
This scheme always outperforms transmitter selection, since it exploits the SNR gains offered by simultaneous transmission from two transmitters. In contrast, the comparison against non-coherent joint transmission is nontrivial and highly dependent on the system parameters, as also shown in Figure~\ref{fig:R_erg}.

\subsubsection{Non-coherent joint Alamouti space-time coding with phase diversity}
We consider the same grid channel model in \eqref{eq:model_phase_div}, and cluster the transmitters in pairs. For each pair of transmitters, we modify the transmit signals in \eqref{eq:signal_phase_div} by replacing the i.i.d information bearing signal $u[m,k]$ with its Alamouti space-time coded version (spanning two consecutive $K$-dimensional frames indexed by $m$). Linear receiver processing leads to the effective single-input single-output channel model $(\forall m \in \stdset{Z})$ $(\forall k\in \{1,\ldots,K\})$
\begin{equation*}
y[m,k] = \|\vec{h}_t[k]\|u[m,k] + z[m,k], \quad t=\left\lfloor \frac{m}{M}\right\rfloor,
\end{equation*}
where $u[m,k]\sim \CN(0,P)$, and $(\forall t\in \stdset{Z})$ $(\forall k\in \{1,\ldots,K\})$,
\begin{equation*}
\vec{h}_t[k]\eqdef \begin{bmatrix}
\sum_{l=1}^{L/2}\beta_{l,t}e^{j(\theta_{l,t}+\phi_{l,k})}\\
\sum_{l=L/2+1}^{L}\beta_{l,t}e^{j(\theta_{l,t}+\phi_{l,k})}
\end{bmatrix}.
\end{equation*}
The above scheme achieves the outage rate 
\begin{equation*}
R_{\mathrm{out}} = \sup_{r\in \stdset{R}} r\cdot \P\left(\dfrac{1}{K}\sum_{k=1}^K\log(1+\|\vec{h}[k]\|^2P) \geq r\right),
\end{equation*}
and the ergodic rate
\begin{equation*}
R = \E\left[\dfrac{1}{K}\sum_{k=1}^K\log(1+\|\vec{h}[k]\|^2P)\right]=\E[\log(1+\|\vec{h}\|^2P)],
\end{equation*}
where $\vec{h} \eqdef [
\sum_{l=1}^{L/2}\beta_l e^{j\theta_l}\quad 
\sum_{l=L/2+1}^{L}\beta_le^{j\theta_l}]^\T$ is the effective channel without phase diversity. Similar to non-coherent joint transmission, the use of phase diversity is motivated by the significant benefits it offers in terms of outage rate and simplification of rate adaptation / HARQ mechanisms for the ergodic regime. In particular, similar to non-coherent joint transmission, as $K$ grows large, the instantaneous rate becomes essentially driven by the blockage process: 
\begin{proposition}
Let $(\forall (i_1,i_2) \in \{0,\ldots,L/2\}^2)$ $\bar{R}(i_1,i_2)\eqdef $
\begin{equation*}
\begin{split}
\E\left[\log\left(1+\left| \textstyle \sum_{l=1}^{i_1}e^{j\theta_l}\right|^2P+\left| \textstyle \sum_{l=L/2+1}^{L/2+i_2}e^{j\theta_l}\right|^2P\right)\right].
\end{split}
\end{equation*}
Then, $(\forall \epsilon > 0)$
\begin{align*}
&\P\left(\left|\dfrac{1}{K}\sum_{k=1}^{K}\log(1+\|\vec{h}[k]\|^2P)-\bar{R}(\alpha_1,\alpha_2) \right|\geq \epsilon\right)\underset{K\to \infty}{\to} 0,
\end{align*}
where  $(\alpha_1,\alpha_2) \eqdef \left(\sum_{l=1}^{L/2}\beta_l,\sum_{l=L/2+1}^{L}\beta_l\right)$.
\end{proposition}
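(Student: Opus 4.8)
The plan is to follow the same argument as in the proof of Proposition~\ref{prop:lln} (Appendix~\ref{proof:lln}), replacing the single aggregate phasor sum by the two cluster sums that define $\vec{h}[k]$. First I would condition on the pair of processes $(\vec{\beta},\vec{\theta})$. Since the injected phase vectors $\{(\phi_{1,k},\ldots,\phi_{L,k})\}_{k=1}^{K}$ are i.i.d.\ across the frame index $k$, the sequence $\{\log(1+\|\vec{h}[k]\|^2P)\}_{k=1}^{K}$ is, conditionally on $(\vec{\beta},\vec{\theta})$, a sequence of i.i.d.\ random variables, each bounded above by $\log(1+L^2P)$ because $\|\vec{h}[k]\|^2\leq L^2$. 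Hence the weak (indeed strong) law of large numbers applies and gives convergence in probability of the empirical average $\frac{1}{K}\sum_{k=1}^{K}\log(1+\|\vec{h}[k]\|^2P)$ to the conditional mean $\E[\log(1+\|\vec{h}[1]\|^2P)\mid\vec{\beta},\vec{\theta}]$.

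Second, I would identify this conditional mean with $\bar{R}(\alpha_1,\alpha_2)$. Conditionally on $(\vec{\beta},\vec{\theta})$, each phase $\theta_l+\phi_{l,1}$ reduced modulo $2\pi$ is uniform on $[0,2\pi)$, and these are independent across $l$, because each $\phi_{l,1}$ is uniform and independent of everything else. Therefore the first entry of $\vec{h}[1]$ equals in distribution $\sum_{l=1}^{\alpha_1}e^{j\psi_l}$ and the second entry equals in distribution $\sum_{l=1}^{\alpha_2}e^{j\psi'_l}$, with all $\psi_l,\psi'_l$ i.i.d.\ $\mathrm{Uniform}(0,2\pi)$; moreover the two entries are \emph{independent}, since the two clusters are formed from disjoint sets of transmitters. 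Substituting into $\log(1+\|\vec{h}[1]\|^2P)=\log(1+|[\vec{h}[1]]_1|^2P+|[\vec{h}[1]]_2|^2P)$ and taking the conditional expectation yields exactly $\bar{R}(\alpha_1,\alpha_2)$ as defined in the statement.

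Third, I would remove the conditioning. Both the conditional mean and the conditional law of $\frac{1}{K}\sum_{k=1}^{K}\log(1+\|\vec{h}[k]\|^2P)$ depend on $(\vec{\beta},\vec{\theta})$ only through $(\alpha_1,\alpha_2)$, which takes finitely many values; for each such value the conditional probability $\P(|\,\cdot\,-\bar{R}(\alpha_1,\alpha_2)|\geq\epsilon\mid\vec{\beta},\vec{\theta})$ tends to $0$ by the first two steps. The unconditional probability is the expectation of this conditional probability over $(\vec{\beta},\vec{\theta})$, so it also tends to $0$ by bounded convergence, which is the claim.

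The only step requiring care is the distributional identity in the second step: that conditioning on $(\vec{\beta},\vec{\theta})$ leaves the per-frame phases $\{\theta_l+\phi_{l,1}\bmod 2\pi\}_{l}$ i.i.d.\ uniform, and that the two cluster sums are independent. This relies precisely on the uniformity and mutual independence of the injected phases $\phi_{l,k}$ together with the fact that the clustering partitions the transmitters, so the two sums run over disjoint index sets. Everything else is routine once the argument of Proposition~\ref{prop:lln} is in place, since boundedness of the summands makes the law of large numbers immediate.
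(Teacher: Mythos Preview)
Your proposal is correct and follows essentially the same approach as the paper: condition on $(\vec{\beta},\vec{\theta})$, apply the weak law of large numbers to the resulting i.i.d.\ sequence, and then remove the conditioning via dominated convergence. You spell out in detail the identification of the conditional mean with $\bar{R}(\alpha_1,\alpha_2)$ via the uniformity of $\theta_l+\phi_{l,1}\bmod 2\pi$ and the disjointness of the two clusters, which the paper leaves implicit; this is a welcome addition but not a different argument.
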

\begin{proof}
(Sketch) The proof is based on the weak law of large numbers and it is similar to the proof of Proposition~\ref{prop:lln}.
\end{proof}
Interestingly, Figure~\ref{fig:R_out} and Figure~\ref{fig:R_erg} show that the above scheme is able to recover a significant fraction of both the outage and the ergodic capacity in most regimes of interest. 

\section{Coarse time synchronization}
\label{sec:time}
We now assume that the network can jointly compensate the different physical propagation delays
of the signals from different transmitters only up to some maximum timing offset $\tau_{\max}>0$. As customary, we will use OFDM to provide robustness against unknown residual delays. In particular, we consider the time-domain channel model $(\forall m \in \stdset{Z})$
\begin{equation*}
y[m] = \sum_{l=1}^L(h_l * x_l)[m]+z[m],
\end{equation*}
where the impulse responses $h_l[m]$ for $l\in \{1,\ldots,L\}$ model the intermittent block fading process and the inter-symbol interference originating from the residual delays. Specifically, we let:
\begin{equation*}
h_l[m] = h_{l,t}g(m-\tau_l), \quad h_{l,t} = \beta_{l,t}e^{j\theta_{l,t}}, \quad t = \left\lfloor\frac{m}{T}\right\rfloor,
\end{equation*}
where $g$ is some continuous-time impulse response which models the convolution between the transmit and receive filters, and $\tau_l \leq \tau_{\max}$ is an unknown and possibly non-integer residual delay for the $l$th transmitter. To simplify the analysis, we assume square pulses at the transmitter side and matched filtering at the receiver side (similar to \cite{buzzi2018single}), i.e., a triangular impulse response
\begin{equation*}
g(x)= \begin{cases}
1-|x|, & x\in (-1,1) \\
0, & \text{otherwise},
\end{cases}
\end{equation*}
and neglect practical spectral mask constraints. We leave the analysis of more realistic and potentially better performing waveforms to future work.  


Assuming an OFDM system with $K$ subcarriers and a cyclic prefix of length $D \geq \lceil \tau_{\max} \rceil +1$, such that an integer number $M = T/(K+D)$ of OFDM symbols are transmitted in each fading block, the channel model in the time-frequency domain is given by $(\forall m \in \stdset{Z})$ $(\forall k\in \{0,\ldots,K-1\})$
\begin{equation*}
Y[m,k] = \sum_{l=1}^LH_{l,t}[k]X_l[m,k] + Z[m,k],\quad t=\left\lfloor \frac{m}{M}\right\rfloor, 
\end{equation*}
where $Z[m,k]\sim \CN(0,1)$ is a sample of a white Gaussian noise process (in both time and frequency), and where $H_{l,t}[k]$ is the $K$-points discrete-time Fourier transform (DFT) of $h_{l,t}[m]$. By splitting the delay $\tau_l$ into an integer part $d_l \in \stdset{N}$ and a fractional part $\delta_l \in [0,1)$ such that $\tau_l = d_l + \delta_l$, we obtain
\begin{equation*}
H_{l,t}[k] = \beta_{l,t} e^{j\theta_{l,t}}e^{-j2\pi \frac{k}{K}d_l}G_l[k],
\end{equation*}
where $G_l[k] \eqdef (1-\delta_l) + \delta_le^{-j2\pi\frac{k}{K}}$ is the DFT of $g(m-\delta_l)$.

\subsection{Capacity}
To capture the impact of unkown and uncontrollable residual delays, we follow a worst-case approach similar to the information theoretical literature on asynchronous \cite{cover1982asynchronous} or arbitrarily varying \cite{blackwell1960capacities} channels, and define ergodic capacity as the maximum achievable ergodic rate for all possible delays $\vec{\tau}= (\tau_1,\ldots,\tau_L) \in [0,\tau_{\max}]^L$. This implies that an upper bound on the ergodic capacity of the considered channel is readily given by the ergodic capacity under perfect time synchronization $C = \E[\log(1+\alpha P)]$, studied in Proposition~\ref{prop:C}. 

In addition, by applying for each subcarrier the signaling scheme achieving $C$, we obtain the following lower bound on the ergodic capacity:
\begin{align}\label{eq:C_worstcase_def}
R &= \inf_{\vec{\tau}\in[0,\tau_{\max}]^L}\dfrac{1}{K+D}\sum_{k=0}^{K-1}\E\left[\log\left(1+\sum_{l=1}^L|H_l[k]|^2P\right)\right]
\end{align} 
where $H_l[k] \eqdef \beta_le^{j(\theta_l-2\pi \frac{k}{K}d_l)}G_l[k]$ denotes a realization of $H_{l,t}[k]$. Since $|H_l[k]|^2=\beta_l|G_l[k]|^2$ does not depend on $d_l$, we notice that the integer parts $\vec{d}=(d_1,\ldots,d_L)$ of the delays $\vec{\tau}$ contribute to capacity loss only through the cyclic prefix length $D$, i.e., the system is insensitive to their actual values. On the other hand, the fractional parts $\vec{\delta}=(\delta_1,\ldots,\delta_L)$ contribute to capacity loss in a non-trivial manner through the induced frequency selectivity, i.e., through the fluctuations of $|G_l[k]|^2$ across the subcarriers. 

It turns out that the ergodic rate $R$ in \eqref{eq:C_worstcase_def} admits a closed form expression given by the intuitive case of completely off-grid sampling $(\forall l \in \{1,\ldots,L\})$~$\delta_l=0.5$. 
\begin{proposition}\label{prop:C_worstcase}
Consider the achievable ergodic rate~\eqref{eq:C_worstcase_def}. The following equality holds:
\begin{align*}
R &= \dfrac{1}{K+D}\sum_{k=0}^{K-1}\E\left[\log\left(1+\alpha \frac{P}{2}\left(1+\cos\left(\frac{2\pi k}{K}\right)\right)\right)\right].
\end{align*}
Furthermore, 
\begin{align*}
\lim_{K\to \infty} R &= \E\left[\log\left(1+\alpha \frac{P}{4}+\frac{1}{2}(\sqrt{1+\alpha P}-1)\right)\right].
\end{align*}
\end{proposition}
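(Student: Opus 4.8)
The plan is to prove the two claims separately. For the first identity I would make the dependence of the objective in \eqref{eq:C_worstcase_def} on the residual delays fully explicit. Since $\beta_l\in\{0,1\}$ gives $|H_l[k]|^2=\beta_l|G_l[k]|^2$, a short computation yields
\begin{equation*}
|G_l[k]|^2 = (1-\delta_l)^2+\delta_l^2+2\delta_l(1-\delta_l)\cos\left(\frac{2\pi k}{K}\right) = 1-2\delta_l(1-\delta_l)\left(1-\cos\left(\frac{2\pi k}{K}\right)\right).
\end{equation*}
Two facts follow. First, the integer parts $\vec{d}=(d_1,\ldots,d_L)$ of the delays do not appear, consistent with the remark after \eqref{eq:C_worstcase_def}. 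Second, since $1-\cos(2\pi k/K)\geq 0$ for every $k$, the map $\delta_l\mapsto|G_l[k]|^2$ is minimized, \emph{simultaneously for all subcarriers} $k$, at $\delta_l=1/2$, where $\delta_l(1-\delta_l)=1/4$ and hence $|G_l[k]|^2=\frac12(1+\cos(2\pi k/K))$.

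Because $x\mapsto\log(1+xP)$ is increasing and both the expectation and the sum over $k$ preserve order, choosing $\delta_l=1/2$ for every $l$ (feasible whenever $\tau_{\max}\geq 1/2$, e.g.\ with $d_l=0$) simultaneously minimizes every term $\E[\log(1+\sum_{l=1}^L|H_l[k]|^2P)]$ and therefore attains the infimum in \eqref{eq:C_worstcase_def}. At this worst-case profile $\sum_{l=1}^L|H_l[k]|^2=\frac12(1+\cos(2\pi k/K))\sum_{l=1}^L\beta_l=\frac{\alpha}{2}(1+\cos(2\pi k/K))$, which gives the claimed closed form for $R$.

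For the asymptotic statement I would first note $K/(K+D)\to 1$ as $K\to\infty$, so it suffices to analyze $\frac1K\sum_{k=0}^{K-1}\E\left[\log\left(1+\alpha\frac P2(1+\cos(2\pi k/K))\right)\right]$, a Riemann sum of the continuous $2\pi$-periodic function $\omega\mapsto\E\left[\log\left(1+\alpha\frac P2(1+\cos\omega)\right)\right]$, hence convergent to $\frac1{2\pi}\int_0^{2\pi}\E\left[\log\left(1+\alpha\frac P2(1+\cos\omega)\right)\right]d\omega$. Since $\alpha$ takes values in the finite set $\{0,\ldots,L\}$, the expectation is a finite sum and may be exchanged with the integral for free, so it remains to evaluate, for $a:=\alpha P/2$,
\begin{equation*}
\frac{1}{2\pi}\int_0^{2\pi}\log\left((1+a)+a\cos\omega\right)d\omega.
\end{equation*}
I would use the classical identity $\frac{1}{2\pi}\int_0^{2\pi}\log(b+c\cos\omega)\,d\omega=\log\frac{b+\sqrt{b^2-c^2}}{2}$, valid for $b\geq|c|\geq 0$, which itself follows from factoring $b+c\cos\omega$ into linear factors in $e^{j\omega}$ together with $\frac{1}{2\pi}\int_0^{2\pi}\log|e^{j\omega}-\rho|\,d\omega=\log\max(1,|\rho|)$ (Jensen's formula). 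With $b=1+a$ and $c=a$ one gets $b^2-c^2=1+2a=1+\alpha P$, so the integral equals $\log\left(\frac12+\frac{\alpha P}{4}+\frac12\sqrt{1+\alpha P}\right)=\log\left(1+\frac{\alpha P}{4}+\frac12(\sqrt{1+\alpha P}-1)\right)$; taking the expectation yields the stated limit, and the case $\alpha=0$ is consistent since both sides vanish.

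The computations are all routine. The one step that calls for a little care is the first one: recognizing that the single choice $\delta_l=1/2$ minimizes $|G_l[k]|^2$ for all subcarriers $k$ at once, so that the per-subcarrier minimizations are mutually compatible and the infimum over $\vec{\tau}$ in \eqref{eq:C_worstcase_def} genuinely decouples across $l$ and $k$ — this is exactly what identifies the fully off-grid profile as the worst case. The only mild caveat is the feasibility of $\delta_l=1/2$, which requires $\tau_{\max}\geq 1/2$ (the regime relevant for coarse synchronization); for smaller $\tau_{\max}$ the minimizing fractional delay would be $\tau_{\max}$ itself rather than $1/2$.
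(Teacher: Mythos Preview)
Your proposal is correct and follows essentially the same route as the paper's proof: both identify $\delta_l=1/2$ as the per-subcarrier minimizer of $|G_l[k]|^2$ via the quadratic dependence on $\delta_l$, invoke monotonicity of $\log$ to decouple the infimum, and then evaluate the limit via Riemann sums together with the standard integral $\frac{1}{2\pi}\int_0^{2\pi}\log(b+c\cos\omega)\,d\omega=\log\frac{b+\sqrt{b^2-c^2}}{2}$ (the paper packages this as Lemma~\ref{lem:closedform}). Your additional remark on the feasibility constraint $\tau_{\max}\geq 1/2$ is a valid caveat that the paper leaves implicit.
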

\begin{proof}
The proof is given in Appendix~\ref{proof:C_worstcase}.
\end{proof}

Following a similar worst-case approach, the outage capacity of the considered channel can be upper bounded by the outage capacity under perfect time synchronization 
\begin{equation*}
C_{\mathrm{out}}=\max_{i\in\{1,\ldots,L\}} \P(\alpha \geq i) \log(1+iP),
\end{equation*}
and lower bounded by
\begin{equation}\label{eq:C_out_worstcase_def}
\begin{split}
&R_{\mathrm{out}}=\inf_{\vec{\tau}\in[0,\tau_{\max}]^L} \sup_{r\in \stdset{R}} \\
&r\cdot\P\left(\dfrac{1}{K+D}\sum_{k=0}^{K-1}\log\left(1+\sum_{l=1}^L|H_l[k]|^2P\right)\geq r\right),
\end{split}
\end{equation}
which can be characterized in closed form as stated next.
\begin{proposition}\label{prop:C_out_worstcase}
Consider the achievable outage rate~\eqref{eq:C_out_worstcase_def}. The following equality holds:
\begin{equation*}
\begin{split}
&R_{\mathrm{out}}= \max_{i\in\{1,\ldots,L\}} \\
& \P(\alpha \geq i)\dfrac{1}{K+D}\sum_{k=0}^{K-1}\log\left(1+i \frac{P}{2}\left(1+\cos\left(\frac{2\pi k}{K}\right)\right)\right).
\end{split}
\end{equation*}
Furthermore, $\lim_{K\to \infty} R_{\mathrm{out}} = $
\begin{equation*}
\max_{i\in\{1,\ldots,L\}}\P(\alpha \geq i)\log\left(1+i\frac{P}{4}+\frac{1}{2}(\sqrt{1+i P}-1)\right).
\end{equation*}
\end{proposition}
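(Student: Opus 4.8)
The plan is to combine a worst-case analysis over the residual delays with the step-function argument already used in the proof of Proposition~\ref{prop:C_out}. First I would reduce the per-subcarrier gains: since $\beta_l\in\{0,1\}$ and $|e^{j(\cdot)}|=1$, we have $|H_l[k]|^2 = \beta_l|G_l[k]|^2$, and a direct computation gives $|G_l[k]|^2 = 1 - 2\delta_l(1-\delta_l)\bigl(1-\cos\tfrac{2\pi k}{K}\bigr)$, which depends on $\vec{\tau}$ only through the fractional parts $\vec{\delta}$. The elementary bound $2\delta_l(1-\delta_l)\le \tfrac12$, with equality iff $\delta_l=\tfrac12$, then yields the pointwise inequality $|G_l[k]|^2 \ge \tfrac12\bigl(1+\cos\tfrac{2\pi k}{K}\bigr)$ for every $k$ and every $l$.

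Next I would establish the lower bound on $R_{\mathrm{out}}$. Denote by $\rho(\vec{\tau},\vec{\beta}) \eqdef \tfrac{1}{K+D}\sum_{k=0}^{K-1}\log\bigl(1+\sum_l\beta_l|G_l[k]|^2P\bigr)$ the instantaneous rate for a given delay vector and blockage realization, and set $g(i)\eqdef \tfrac{1}{K+D}\sum_{k=0}^{K-1}\log\bigl(1+i\tfrac{P}{2}(1+\cos\tfrac{2\pi k}{K})\bigr)$. Summing the pointwise bound over the non-blocked transmitters gives $\rho(\vec{\tau},\vec{\beta})\ge g(\alpha)$ almost surely, for every $\vec{\tau}$. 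Since $g$ is strictly increasing on $\{0,\dots,L\}$, evaluating the outage expression at $r=g(i)$ gives $\P\bigl(\rho(\vec{\tau},\vec{\beta})\ge g(i)\bigr)\ge \P\bigl(g(\alpha)\ge g(i)\bigr)=\P(\alpha\ge i)$, hence $\sup_r r\,\P\bigl(\rho(\vec{\tau},\vec{\beta})\ge r\bigr)\ge \max_i g(i)\,\P(\alpha\ge i)$ for every $\vec{\tau}$, and therefore $R_{\mathrm{out}}\ge \max_i g(i)\,\P(\alpha\ge i)$.

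For the matching upper bound I would simply evaluate \eqref{eq:C_out_worstcase_def} at the admissible choice $\delta_l=\tfrac12$ for all $l$ (the integer parts $\vec{d}$ are irrelevant, entering only through the fixed cyclic-prefix length $D$). Then $|G_l[k]|^2=\tfrac12(1+\cos\tfrac{2\pi k}{K})$ for all $l$, so $\rho(\vec{\tau},\vec{\beta})=g(\alpha)$ exactly; since $g(\alpha)$ takes finitely many values, $r\mapsto \P\bigl(g(\alpha)\ge r\bigr)$ is a decreasing step function with jumps of height $\P(\alpha\ge i)$ at $r=g(i)$, and by the same argument as in Proposition~\ref{prop:C_out} the largest-rectangle value is $\max_i g(i)\,\P(\alpha\ge i)$. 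This gives $R_{\mathrm{out}}\le \max_i g(i)\,\P(\alpha\ge i)$, establishing the claimed finite-$K$ identity.

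Finally, for the large-$K$ limit I would note that $D$ is fixed, so $\tfrac{1}{K+D}\sum_{k=0}^{K-1}\log\bigl(1+i\tfrac{P}{2}(1+\cos\tfrac{2\pi k}{K})\bigr)$ is (up to the factor $\tfrac{K}{K+D}\to 1$) a Riemann sum converging to $\tfrac{1}{2\pi}\int_0^{2\pi}\log\bigl(1+i\tfrac{P}{2}(1+\cos\phi)\bigr)\,d\phi$; applying the standard integral $\tfrac{1}{2\pi}\int_0^{2\pi}\log(a+b\cos\phi)\,d\phi=\log\tfrac{a+\sqrt{a^2-b^2}}{2}$ with $a=1+i\tfrac{P}{2}$, $b=i\tfrac{P}{2}$ (so $a^2-b^2=1+iP$) and simplifying gives $\log\bigl(1+i\tfrac{P}{4}+\tfrac12(\sqrt{1+iP}-1)\bigr)$; since the outer maximum is over the finite set $\{1,\dots,L\}$, the limit passes through it. I expect the only real subtlety to be the lower-bound step: one must verify that the pointwise domination $|G_l[k]|^2\ge\tfrac12(1+\cos\tfrac{2\pi k}{K})$ transfers to a domination of the outage probabilities \emph{uniformly} in $\vec{\tau}$ (via monotonicity of $g$), so that the infimum over delays cannot fall below the value attained at $\vec{\delta}=(\tfrac12,\dots,\tfrac12)$; everything else is the step-function bookkeeping of Proposition~\ref{prop:C_out} plus a routine Riemann-sum and log-integral computation.
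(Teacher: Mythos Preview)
Your proof is correct and follows essentially the same route the paper sketches: reduce the per-subcarrier gains to their minima at $\delta_l=\tfrac12$ via the parabola/monotonicity argument from the proof of Proposition~\ref{prop:C_worstcase}, apply the step-function argument of Proposition~\ref{prop:C_out} at that worst case, and evaluate the large-$K$ Riemann sum using the integral of Lemma~\ref{lem:closedform}. Your explicit two-sided (lower/upper) bound on the infimum over $\vec{\tau}$ is a clean way to handle the $\sup_r$ inside the $\inf_{\vec{\tau}}$, and is precisely what the paper's sketch intends.
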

\begin{proof}
(Sketch) The proof follows from the same techniques as in the proof of Proposition~\ref{prop:C_worstcase}.
\end{proof}

Proposition~\ref{prop:C_worstcase} and Proposition~\ref{prop:C_out_worstcase} show that the capacity of the considered system follows similar trends as in the synchronous case. In addition, they characterize the price of enforcing robustness against uncontrollable residual delays. More specifically, the above results show that  robust transmission can be theoretically achieved at the price of a  cyclic prefix overhead $\approx K/(K+D)$, which depends on the maximum uncontrollable integer delay, and a SNR loss factor $0.5+0.5 \cos(2\pi k /K)$ for each subcarrier $k$, which stems from the uncontrollable fractional delays. Clearly, the price of the multiplicative overhead can be mitigated if the number of subcarriers $K$ can be made sufficiently large, i.e., such that $D \ll K \ll T$ holds. Moreover, the above results show that, for large $K$, the effective SNR for a given number of non-blocked transmitters $\alpha\in \stdset{N}$ approaches 
\begin{equation*}
\alpha \frac{P}{4}+\frac{1}{2}(\sqrt{1+\alpha P}-1) \geq \alpha \frac{P}{4},
\end{equation*}
i.e., the losses can be limited to $6$dB from the ideal SNR $P$.

\subsection{Non-coherent joint transmission and phase diversity} 
As a more practical alternative to the near-optimal scheme studied in the previous section, we now revisit non-coherent joint transmission, i.e., the transmission of a single scalar codeword from all transmitters simultaneously, with and without phase diversity. By taking the same worst-case approach as in the capacity analysis, we first consider the achievable ergodic rate (with or without phase diversity) $R=$
\begin{align}\label{eq:R_worstcase_def}
\inf_{\vec{\tau}\in[0,\tau_{\max}]^L}\dfrac{1}{K+D}\sum_{k=0}^{K-1}\E\left[\log\left(1+\left|\sum_{l=1}^LH_l[k]\right|^2P\right)\right].
\end{align}
The above expression can be evaluated as stated next. Similarly to the capacity analysis, the resulting expression corresponds to the case of completely off-grid sampling.
\begin{proposition}\label{prop:R_worstcase}
Consider the achievable ergodic rate~\eqref{eq:R_worstcase_def}. The following equality holds: 
\begin{align*}
R = \dfrac{1}{K+D}\sum_{k=0}^{K-1}\E\left[\log\left(1+\left|h\right|^2 \frac{P}{2}\left(1+\cos\left(\frac{2\pi k}{K}\right)\right)\right)\right],
\end{align*}
where $h\eqdef \sum_{l=1}^ L\beta_le^{j\theta_l}$.
Furthermore, 
\begin{align*}
\lim_{K\to \infty} R = \E\left[\log\left(1+|h|^2 \frac{P}{4}+\frac{1}{2}(\sqrt{1+|h|^2 P}-1)\right)\right].
\end{align*}
\end{proposition}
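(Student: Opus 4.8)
The plan is to identify the worst-case residual delays in~\eqref{eq:R_worstcase_def} and show they correspond to $\delta_l = 0.5$ for every $l$, exactly as in Proposition~\ref{prop:C_worstcase}; the genuinely new difficulty is that, unlike the capacity expression built from $\sum_l|H_l[k]|^2$, the non-coherent sum $\bigl|\sum_l H_l[k]\bigr|^2$ does not make the transmitter phases disappear, so the reduction to a scalar optimization over the delay-induced magnitudes is less immediate.

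First I would fix a subcarrier index $k$ and simplify the corresponding expectation. Writing $G_l[k] = |G_l[k]|\,e^{j\gamma_{l,k}}$, we have $H_l[k] = \beta_l\,|G_l[k]|\,e^{j(\theta_l - 2\pi k d_l/K + \gamma_{l,k})}$, and since $\{\theta_l\}$ is i.i.d. $\mathrm{Uniform}(0,2\pi)$, mutually independent and independent of $\vec{\beta}$, adding the deterministic shifts $-2\pi k d_l/K + \gamma_{l,k}$ (for fixed $k$, $\vec{d}$, $\vec{\delta}$) leaves the joint law of $(\vec{\beta},\{\theta_l\})$ unchanged. Hence
\[
\E\Bigl[\log\Bigl(1+\bigl|\textstyle\sum_l H_l[k]\bigr|^2P\Bigr)\Bigr] = \E\Bigl[\log\Bigl(1+P\bigl|\textstyle\sum_l \beta_l\, g_l\, e^{j\theta_l}\bigr|^2\Bigr)\Bigr] \reqdef F(g_1,\dots,g_L),
\]
with $g_l \eqdef |G_l[k]|$. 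A short computation gives $g_l^2 = 1 - 2\delta_l(1-\delta_l)\bigl(1-\cos(2\pi k/K)\bigr)$, which over the feasible range (and for $\tau_{\max}$ large enough that $\delta_l=\tfrac12$ is feasible) is minimized exactly at $\delta_l=\tfrac12$, where $g_l^2 = \tfrac12\bigl(1+\cos(2\pi k/K)\bigr)$ for all $l$. Note also that the integer parts $\vec{d}$ enter~\eqref{eq:R_worstcase_def} only through the cyclic-prefix length and are therefore irrelevant to the infimum.

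The crux is then the monotonicity claim: $F$ is non-decreasing in each $g_l\ge 0$. Fixing $\vec{\beta}$ and $\{\theta_j\}_{j\ne l}$ and setting $V \eqdef \sum_{j\ne l}\beta_j g_j e^{j\theta_j}$, it suffices (the case $\beta_l=0$ being trivial) to show that $g\mapsto \tfrac{1}{2\pi}\int_0^{2\pi}\log\bigl(1+P|V+g e^{j\theta}|^2\bigr)\,d\theta$ is non-decreasing. But this is the circular mean of $z\mapsto\log(1+P|z|^2)$ over the circle of radius $g$ centered at $V$, and since $\Delta\log(1+P|z|^2) = 4P/(1+P|z|^2)^2 \ge 0$ this function is subharmonic, so its circular means are non-decreasing in the radius (equivalently, the $g$-derivative of the mean equals $\tfrac{1}{2\pi g}\iint_{|z-V|\le g}\Delta\log(1+P|z|^2)\,dz\ge 0$). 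A fully elementary alternative evaluates the $g$-derivative via $\int_0^{2\pi}(s+a\cos\theta)^{-1}d\theta = 2\pi/\sqrt{s^2-a^2}$ and reduces nonnegativity to $2vsg\ge a(g^2+v^2)$ with $v=|V|$, $s=1+P(v^2+g^2)$, $a=2Pvg$, which holds with slack $2vg$. With monotonicity in hand, for every $k$ the term $F$ is minimized by taking each $g_l=|G_l[k]|$ as small as possible, i.e.\ $\delta_l=\tfrac12$ for all $l$; this choice is simultaneous across all $k$, so it minimizes the entire sum~\eqref{eq:R_worstcase_def}. Since then $g_l=\sqrt{\tfrac12(1+\cos(2\pi k/K))}$ is common to all $l$, it factors out, $\sum_l\beta_l g_l e^{j\theta_l} = g_l\,h$ with $h=\sum_l\beta_l e^{j\theta_l}$, which yields the claimed closed form for $R$. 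I expect this monotonicity lemma to be the main obstacle; everything else follows the pattern of Proposition~\ref{prop:C_worstcase}.

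Finally, for the limit I would use that $|h|\le L$ almost surely, so the integrand is bounded and dominated convergence lets me pass $\E$ through the limit. For each fixed realization of $\vec{\beta}$, the normalized sum $\tfrac1K\sum_{k=0}^{K-1}\log\bigl(1+|h|^2\tfrac P2(1+\cos(2\pi k/K))\bigr)$ is a Riemann sum converging to $\tfrac{1}{2\pi}\int_0^{2\pi}\log\bigl(1+|h|^2\tfrac P2(1+\cos\omega)\bigr)\,d\omega$, while $K/(K+D)\to 1$. It then remains to evaluate $\tfrac{1}{2\pi}\int_0^{2\pi}\log(1+c(1+\cos\omega))\,d\omega$ for a constant $c\ge 0$: factoring $1+c+c\cos\omega = \tfrac{c}{2}e^{-j\omega}(e^{j\omega}-u_1)(e^{j\omega}-u_2)$ with $u_{1,2}=\bigl(-(1+c)\pm\sqrt{1+2c}\bigr)/c$ and $|u_1|\le 1\le|u_2|$, Jensen's formula $\tfrac{1}{2\pi}\int_0^{2\pi}\log|e^{j\omega}-a|\,d\omega = \log^+|a|$ gives the value $\log\bigl(\tfrac{1+c+\sqrt{1+2c}}{2}\bigr)$. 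Substituting $c=|h|^2P/2$ and simplifying $\tfrac{1+c+\sqrt{1+2c}}{2} = 1+|h|^2\tfrac P4+\tfrac12\bigl(\sqrt{1+|h|^2P}-1\bigr)$ produces the stated limit.
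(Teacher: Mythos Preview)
Your argument is correct and follows the same overall architecture as the paper's: reduce the per-subcarrier expectation to a function of the magnitudes $|G_l[k]|$ only, establish monotonicity in each of these, and conclude that $\delta_l=\tfrac12$ is simultaneously worst for every $k$. The difference lies in how the monotonicity step is carried out. The paper conditions on $X=\sum_{l\neq l'}H_l[k]$ and $\beta_{l'}$, then invokes its Lemma~\ref{lem:closedform}, which evaluates $\E_\theta[\log(1+|x+e^{j\theta}y|^2)]$ in closed form via the definite integral $\int_0^\pi\log(1+c\cos\theta)\,d\theta=\pi\log\bigl(\tfrac{1+\sqrt{1-c^2}}{2}\bigr)$, together with its Lemma~\ref{lem:monotone}, which verifies by direct calculus that this closed form is increasing in $|y|^2$. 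You bypass the explicit formula entirely by recognizing $z\mapsto\log(1+P|z|^2)$ as subharmonic and invoking the standard fact that circular averages of a subharmonic function are non-decreasing in the radius; this is cleaner and explains transparently \emph{why} monotonicity holds, and your elementary alternative recovers the same conclusion without potential theory. For the limit, the paper again reuses the definite integral from Lemma~\ref{lem:closedform}, whereas you factor the trigonometric polynomial and apply Jensen's formula $\tfrac{1}{2\pi}\int_0^{2\pi}\log|e^{j\omega}-a|\,d\omega=\log^+|a|$. Both routes land on the same closed form; the paper's is self-contained via lemmas it already needed for Proposition~\ref{prop:monotonicity}, while yours is more conceptual and avoids special-integral bookkeeping.
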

\begin{proof}
The proof is given in Appendix~\ref{proof:R_worstcase}. 
\end{proof}
The above proposition shows that the ergodic rate achieved by non-coherent joint transmission follows similar trends as in the synchronous case, and experience similar penalties due to cyclic prefix overhead and off-grid sampling as in the worst-case capacity analysis. Unfortunately, the exact (worst-case) achievable outage rate seems difficult to characterize for finite $K$. We know that, if phase diversity is not used, the outage performance must be at least as bad as in the synchronous case and include the aforementioned penalties to ensure robustness against uncontrollable delays. Phase diversity is expected to improve performance significantly, but we provide a formal justification only in an asymptotic sense. 

Building on the OFDM grid model, we apply the phase diversity technique directly in the frequency domain, i.e., we let $(\forall m \in \stdset{Z})$ $(\forall k\in \{0,\ldots,K-1\})$ $(\forall l \in \{1,\ldots,L\})$
\begin{equation*}
X_l[m,k] = e^{j\phi_{l,k}}U[m,k],
\end{equation*}
where $U[m,k]\sim \CN(0,P)$ is a scalar i.i.d. information bearing signal, and where $(\forall l \in\{1,\ldots,L\})$ $(\phi_{l,0},\ldots,\phi_{l,K-1})$ is a vector of random phases independently and uniformly distributed in $[0,2\pi]$. For given delays $\vec{\tau}$, the following instantaneous rate is achievable on subcarrier
$k\in \{0,\ldots,K-1\}$:
\begin{align}\label{eq:R_inst_asynch}
R_k= \log\left(1+\left|\sum_{l=1}^LH_l[k]e^{j\phi_{l,k}}\right|^2P\right).
\end{align}
Similar to the synchronous case, we then observe that phase diversity makes the instantaneous rate fluctuations $\frac{1}{K+D}\sum_{k=0}^{K-1}R_k$ essentially driven by the blockage process.
\begin{proposition}\label{prop:hoeffding}
Consider the per-subcarrier instantaneous rate in \eqref{eq:R_inst_asynch}, and let $(\forall \vec{b}\in \{0,1\}^L)$ $(\forall k \in \{0,\ldots,K-1\})$
\begin{equation*}
\bar{R}_k(\vec{b}) \eqdef \E\left[\log\left(1+\left|\textstyle\sum_{l=1}^Lb_le^{j\theta_l}G_l[k]\right|^2P\right)\right].
\end{equation*}
Then $(\forall \epsilon > 0)$
\begin{align*}
\lim_{K\to \infty}\P\left(\left|\dfrac{1}{K+D}\sum_{k=0}^{K-1}\left(R_k-\bar{R}_k(\vec{\beta})\right) \right|\geq \epsilon\right)=0,
\end{align*}
i.e., the instantaneous rate converges in probability to $\frac{1}{K+D}\sum_{k=0}^{K-1}\bar{R}_k(\vec{\beta})$ as $K$ grows large.
\end{proposition}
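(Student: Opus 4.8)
The plan is to condition on the blockage realization $\vec{\beta}$ (treating the unknown delays, i.e.\ the functions $G_l[\cdot]$, as fixed deterministic quantities throughout) and to reduce the claim to a concentration inequality for a sum of \emph{independent but not identically distributed} bounded random variables — hence the recourse to Hoeffding's inequality rather than to the i.i.d.\ weak law of large numbers used for Proposition~\ref{prop:lln}.

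First I would absorb all deterministic phase terms into the common random phases. Writing $H_l[k]e^{j\phi_{l,k}}=\beta_l\,|G_l[k]|\,e^{j\psi_{l,k}}$ with $\psi_{l,k}:=\theta_l-2\pi k d_l/K+\arg G_l[k]+\phi_{l,k}$ taken modulo $2\pi$, I would use that $\{\phi_{l,k}\}_{l,k}$ is a family of independent variables, each uniform on $[0,2\pi]$ and independent of the channel $(\vec{\beta},\vec{\theta})$ and of the delays: adding (coordinatewise on the torus) the deterministic shift $\theta_l-2\pi k d_l/K+\arg G_l[k]$ leaves the product-uniform law invariant, so $\{\psi_{l,k}\}_{l,k}$ is again i.i.d.\ uniform on $[0,2\pi]$ and independent of $(\vec{\beta},\vec{\theta})$ and the delays. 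Hence, conditionally on $\vec{\beta}$, the per-subcarrier rates $R_k=\log\!\big(1+\big|\sum_{l=1}^L\beta_l|G_l[k]|e^{j\psi_{l,k}}\big|^2P\big)$ are independent across $k$ (each depends only on $\{\psi_{l,k}\}_{l=1}^L$), with conditional mean $\E[R_k\mid\vec{\beta}]=\bar{R}_k(\vec{\beta})$ — this identity holds because $\{|G_l[k]|e^{j\psi_{l,k}}\}_{l}$ and $\{G_l[k]e^{j\theta_l}\}_l$ have the same joint law, the deterministic phase $\arg G_l[k]$ being immaterial under i.i.d.\ uniform phases.

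Next I would invoke concentration. From $|G_l[k]|^2=(1-\delta_l)^2+\delta_l^2+2\delta_l(1-\delta_l)\cos(2\pi k/K)\le 1$ we get $0\le R_k\le\log(1+L^2P)$ almost surely, uniformly in $k$ and in the delays. For fixed $\vec{\beta}$, Hoeffding's inequality then gives, for every $\epsilon>0$,
\begin{equation*}
\P\!\left(\Big|\tfrac{1}{K}\textstyle\sum_{k=0}^{K-1}\big(R_k-\bar{R}_k(\vec{\beta})\big)\Big|\ge\epsilon \,\Big|\, \vec{\beta}\right)\le 2\exp\!\left(-\frac{2K\epsilon^2}{\log^2(1+L^2P)}\right)\xrightarrow[K\to\infty]{}0,
\end{equation*}
and the same bound applies with the normalization $1/(K+D)$ in place of $1/K$, since the fixed $D$ only multiplies the empirical average by $K/(K+D)\le 1$ and cannot increase the probability. (A Chebyshev bound would work just as well, noting that $\V(R_k\mid\vec{\beta})$ is uniformly bounded.) Finally, since $\vec{\beta}$ takes finitely many values, averaging this conditional bound over its stationary distribution via the law of total probability yields the unconditional convergence in probability.

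The main obstacle — or rather the point requiring care — is the first step. Because the residual fractional delays induce frequency selectivity, $|G_l[k]|$ genuinely depends on $k$, so the $R_k$ are \emph{not} identically distributed (unlike in Proposition~\ref{prop:lln}); one therefore cannot quote the i.i.d.\ weak law and must instead (i) verify, via the torus-invariance argument, that the deterministic integer- and fractional-delay phase rotations are harmlessly absorbed by the common random phases — this is what pins $\E[R_k\mid\vec{\beta}]$ to the correct value $\bar{R}_k(\vec{\beta})$ — and (ii) use a concentration inequality valid for independent, non-identically-distributed, bounded summands. The remaining steps (uniform boundedness of $R_k$ and the reduction from the conditional to the unconditional statement) are routine.
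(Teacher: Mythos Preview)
Your proof is correct and follows essentially the same route as the paper: condition on the channel state, apply Hoeffding's inequality to the bounded independent per-subcarrier rates, then uncondition. The only cosmetic difference is that the paper conditions on $(\vec{\beta},\vec{\theta})$ rather than on $\vec{\beta}$ alone (which spares your torus-invariance argument, since the remaining randomness is just the i.i.d.\ phases $\{\phi_{l,k}\}$) and invokes dominated convergence in place of your finiteness argument.
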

\begin{proof}
The proof is based on Hoeffding's inequality. The details are given in Appendix~\ref{proof:hoeffding}.
\end{proof}
On top of simplifying practical implementations of rate adaptation and HARQ mechanisms for the ergodic regime, the above property allows us to approximately evaluate the (worst-case) outage performance by focusing on the asymptotically achievable outage rate
\begin{equation}\label{eq:R_out_worstcase_def}
\bar{R}_{\mathrm{out}}= \inf_{\vec{\tau}\in[0,\tau_{\max}]^L} \sup_{r\in \stdset{R}} r\cdot\P\left(\dfrac{1}{K+D}\sum_{k=0}^{K-1}\bar{R}_k(\vec{\beta})\geq r\right),
\end{equation}
which can be characterized as stated next.
\begin{proposition}\label{prop:R_out_worstcase}
Consider the achievable outage rate~\eqref{eq:R_out_worstcase_def}. The following equality holds:
\begin{equation*}
\begin{split}
\bar{R}_{\mathrm{out}}= \max_{i\in\{1,\ldots,L\}} \P(\alpha \geq i)\dfrac{1}{K+D}\sum_{k=0}^{K-1}\bar{R}_k'(i),
\end{split}
\end{equation*}
where $(\forall k \in \{0,\ldots,K-1\})$ $(\forall i \in\{0,\ldots,L\})$ $\bar{R}_k'(i) \eqdef $
\begin{equation*}
\E\left[\log\left(1+\left|\textstyle\sum_{l=1}^ie^{j\theta_l}\right|^2\frac{P}{2}\left(1+\cos\left(\frac{2\pi k}{K}\right)\right)\right)\right].
\end{equation*}
Furthermore, 
\begin{equation*}
\lim_{K\to \infty} \bar{R}_{\mathrm{out}} = \max_{i\in\{1,\ldots,L\}}\P(\alpha \geq i)\bar{R}(i),
\end{equation*}
where $(\forall i\in \{0,\ldots,L\})$ $\bar{R}(i) \eqdef $
\begin{equation*}
\E\left[\log\left(1+\left|\sum_{l=1}^ie^{j\theta_l}\right|^2 \frac{P}{4}+\frac{\sqrt{1+\left|\sum_{l=1}^ie^{j\theta_l}\right|^2 P}-1}{2}\right)\right].
\end{equation*}
\end{proposition}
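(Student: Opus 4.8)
The plan is to follow the two-stage argument behind Proposition~\ref{prop:C_out_worstcase}: for a fixed delay vector, first evaluate the inner supremum over the rate threshold $r$, and then carry out the infimum over $\vec{\tau}$ by identifying the worst case. Since the integer parts of $\vec{\tau}$ do not appear in $\bar{R}_k(\vec{b})$ (they are accounted for only through the cyclic-prefix length $D$), the infimum in \eqref{eq:R_out_worstcase_def} is effectively over the fractional parts $\vec{\delta}\eqdef(\delta_1,\ldots,\delta_L)\in[0,1)^L$. The only genuinely new ingredient relative to the capacity analysis is that the effective per-subcarrier channel $\sum_l b_l e^{j\theta_l}G_l[k]$ is not governed by a simple power sum, so one cannot minimise a modulus subcarrier by subcarrier; instead I would establish a monotonicity lemma playing the same role.

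First I would observe that, since $\{\theta_l\}$ are i.i.d.\ $\mathrm{Uniform}(0,2\pi)$, the deterministic phase of $G_l[k]$ can be absorbed into $\theta_l$, so that $\bar{R}_k(\vec{b})$ depends on $\vec{b}$ and on $\vec{\delta}$ only through the nonnegative moduli $g_l^{(k)}\eqdef|G_l[k]| = \sqrt{1-2\delta_l(1-\delta_l)(1-\cos(\tfrac{2\pi k}{K}))}$, namely $\bar{R}_k(\vec{b}) = \E[\log(1+|\sum_{l\,:\,b_l=1} g_l^{(k)}e^{j\theta_l}|^2 P)]$. The key lemma is: for every $W\in\stdset{C}$ and every $g\geq g'\geq 0$,
\begin{equation*}
\dfrac{1}{2\pi}\int_0^{2\pi}\log(1+|W+g e^{j\phi}|^2 P)\,d\phi \;\geq\; \dfrac{1}{2\pi}\int_0^{2\pi}\log(1+|W+g' e^{j\phi}|^2 P)\,d\phi .
\end{equation*}
I would prove it using the classical identity $\tfrac{1}{2\pi}\int_0^{2\pi}\log(A+B\cos\phi)\,d\phi = \log\tfrac{A+\sqrt{A^2-B^2}}{2}$, applied with $A = 1+(|W|^2+g^2)P$ and $B = 2g|W|P$ (here $A-B = 1+(|W|-g)^2P>0$, so $A>|B|$), followed by a one-line derivative computation showing $A+\sqrt{A^2-B^2}$ is nondecreasing in $g\geq 0$. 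Conditioning on all active phases but one and iterating the lemma over the non-blocked transmitters then shows that, for each fixed $\vec{b}$, $\bar{R}_k(\vec{b})$ is nonincreasing as each $g_l^{(k)}$ decreases. As $g_l^{(k)}$ is minimised over $\delta_l\in[0,1)$ at $\delta_l=\tfrac12$, where $g_l^{(k)} = \sqrt{\tfrac12(1+\cos(\tfrac{2\pi k}{K}))}\eqdef g_k^{*}$ for all $l$ simultaneously, it follows that $\tfrac{1}{K+D}\sum_k\bar{R}_k(\vec{b})$ is minimised at $\vec{\delta}=(\tfrac12,\ldots,\tfrac12)$, where (using exchangeability of the $\theta_l$) it equals $r_{\alpha(\vec{b})}$, with $r_i\eqdef\tfrac{1}{K+D}\sum_{k=0}^{K-1}\bar{R}_k'(i)$ a function of the number $\alpha(\vec{b})$ of non-blocked transmitters only.

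The remaining steps then follow the pattern of Proposition~\ref{prop:C_out_worstcase} and Proposition~\ref{prop:monotonicity}. The sequence $\{r_i\}$ is strictly increasing in $i$: its $k=0$ term equals $\bar{R}(i)$ of \eqref{eq:barR}, strictly increasing by Proposition~\ref{prop:monotonicity}, while every other term is nondecreasing in $i$ by the same lemma with the common weight $g_k^{*}$ and $g'=0$. Hence, for every $\vec{\delta}$, $\tfrac{1}{K+D}\sum_k\bar{R}_k(\vec{\beta})\geq r_{\alpha(\vec{\beta})}$ pointwise, giving the inclusion $\{\alpha\geq i\}\subseteq\{\tfrac{1}{K+D}\sum_k\bar{R}_k(\vec{\beta})\geq r_i\}$ and therefore $\sup_r r\,\P(\tfrac{1}{K+D}\sum_k\bar{R}_k(\vec{\beta})\geq r)\geq\max_i r_i\,\P(\alpha\geq i)$; while at $\vec{\delta}=(\tfrac12,\ldots,\tfrac12)$ the map $r\mapsto\P(r_\alpha\geq r)$ is a decreasing step function with downward jumps of height $\P(\alpha\geq i)$ at $r=r_i$, so there the inner supremum equals $\max_i r_i\,\P(\alpha\geq i)$ exactly. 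Taking the infimum over $\vec{\tau}$ yields the claimed finite-$K$ formula. For the limit, I would exchange the finite maximum over $i$ with $\lim_{K\to\infty}$, use the Riemann-sum convergence $\tfrac{1}{K+D}\sum_{k=0}^{K-1}\log(1+a\tfrac{P}{2}(1+\cos(\tfrac{2\pi k}{K})))\to\tfrac{1}{2\pi}\int_0^{2\pi}\log(1+a\tfrac{P}{2}(1+\cos\omega))\,d\omega$ (the $D$ in the denominator being harmless since the summands are bounded by $\log(1+i^2P)$), evaluate this integral with the same closed-form identity (now $A^2-B^2 = 1+aP$) to get $\log(1+a\tfrac{P}{4}+\tfrac12(\sqrt{1+aP}-1))$, and pass the expectation over $\vec{\theta}$ through the limit by dominated convergence (bounded integrands, $a=|\sum_{l=1}^i e^{j\theta_l}|^2\leq i^2$), producing $\lim_K r_i = \bar{R}(i)$ as defined in the statement.

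The hard part, and the reason this proposition is not a verbatim copy of Proposition~\ref{prop:C_worstcase}, is the weight-monotonicity lemma together with the conceptual point it resolves: one must show that shrinking each $|G_l[k]|$ can only decrease $\E[\log(1+|\cdot|^2P)]$ even though the contributions superpose \emph{incoherently} at the receiver, which is precisely where the uniform random phases (hence the reduction to real nonnegative weights) and the closed-form Poisson-type integral are essential --- termwise minimisation of an SNR, as in the capacity proof, is not available here. A minor technical point, to be handled as in the earlier worst-case propositions, is whether $\delta_l=\tfrac12$ is admissible for the given $\tau_{\max}$.
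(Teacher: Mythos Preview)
Your proposal is correct and follows essentially the same route as the paper. The paper's proof is a one-line sketch pointing to the techniques of Proposition~\ref{prop:R_worstcase}, which in turn rests on Lemma~\ref{lem:closedform} and Lemma~\ref{lem:monotone}; your ``weight-monotonicity lemma'' is precisely these two lemmas combined, your conditioning-and-iterating step mirrors the paper's argument for the ergodic worst case, and your handling of the outage supremum via the step-function structure and of the $K\to\infty$ limit via the Riemann sum and the same closed-form integral matches the pattern of Propositions~\ref{prop:C_out} and~\ref{prop:C_worstcase}. In short, you have filled in the details the paper leaves implicit, using the same ingredients.
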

\begin{proof}
(Sketch) The proof follows from the same techniques as in the proof of Proposition~\ref{prop:R_worstcase}.
\end{proof}
As expected, we observe that the asymptotically achievable outage rate follows similar trends as in the synchronous case, and experience similar penalties as in the worst-case capacity analysis. We conclude this section by pointing out that the above discussion can be extended to a related frequency-domain version of the non-coherent joint Alamouti scheme with phase diversity studied in Section~\ref{sec:schemes}. However, we omit the details due to space limitation. For the same reason, we do not repeat Figure~\ref{fig:R_out} and Figure~\ref{fig:R_erg} for the asynchronous case. Nevertheless, we report that, as expected, the curves are similar to the synchronous case up to a SNR penalty of roughly 6dB.

\begin{figure}[h!]
\centering
\begin{overpic}[abs,unit=1mm,scale=.25,width=2.3in]{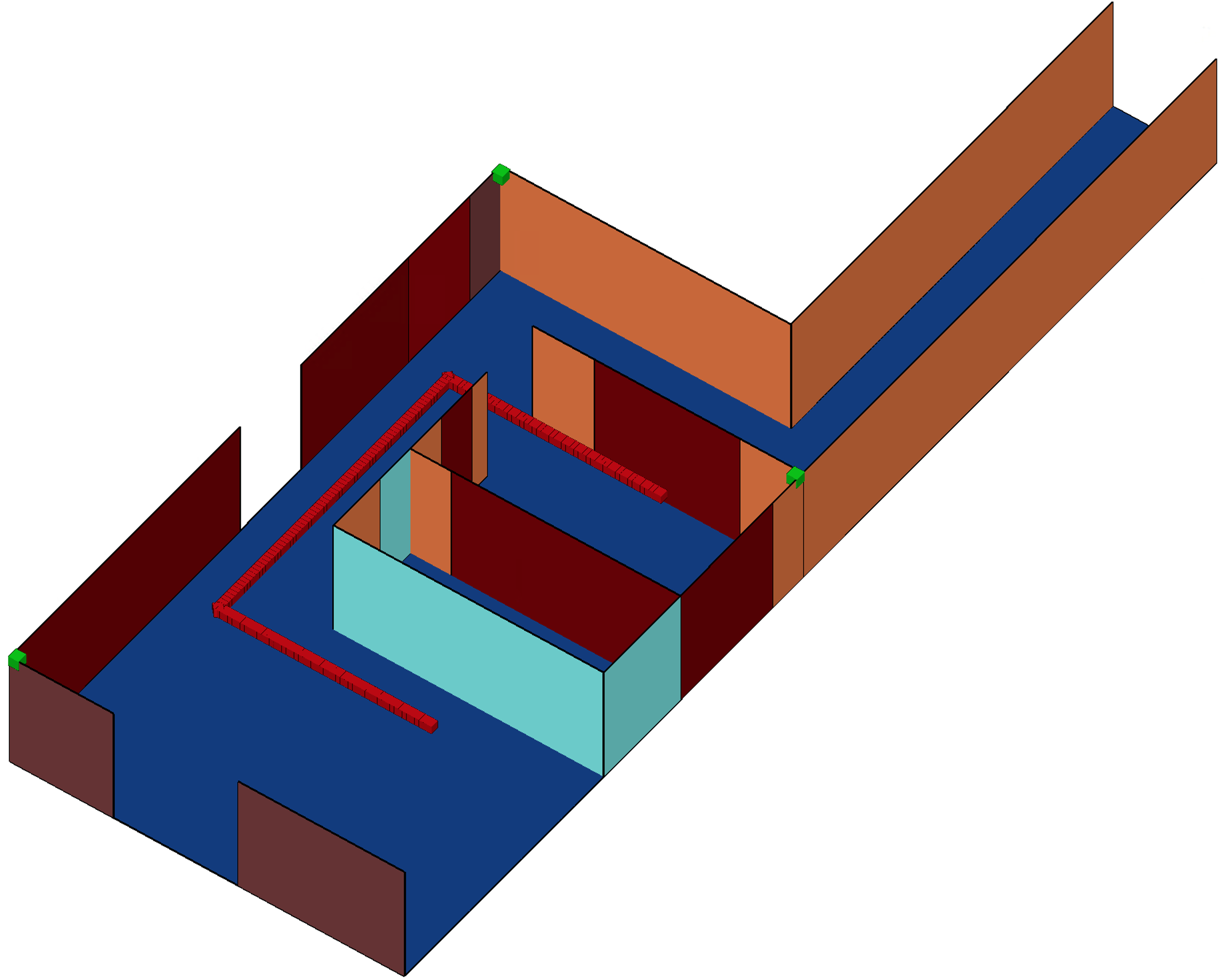}
\put(0,16){\color{green} TX~1}
\put(25,37){\color{green} TX~2}
\put(37,20){\color{green} TX~3}
\put(15,9){\color{red} RX}
\end{overpic}
\caption{Simulated indoor environment with $L=3$ transmitters jointly serving a single receiver moving along a predefined trajectory.}
\label{fig:setup}
\end{figure}
\section{Ray-tracing simulations}
\label{sec:sim}
In this section we validate the effectiveness of our theoretical analysis by means of realistic ray-tracing simulations. We use Remcom's Wireless InSite® to construct an indoor environment and to generate samples of the channel between $L=3$ transmitters and a receiver moving along a trajectory, as shown in Figure~\ref{fig:setup}. Importantly, in contrast to the simplified statistical model in Section~\ref{sec:model}, the channel samples are formed based on a deterministic model that does not neglect the impact of multipath propagation, blocked signals, and different path loss from different transmitters. Nevertheless, we stress that, due to the inherent deterministic nature and high computational complexity of ray tracing simulations, the goal of this section is not to perform a comprehensive statistical assessment of the large-scale system behavior by varying parameters such as the probability of blockage, as depicted in Figure~\ref{fig:R_out} and Figure~\ref{fig:R_erg}. Rather, the goal of this section is to verify the effectiveness of the considered transmission techniques in a hypothetical yet realistic system deployment. For example, in this section, we do not investigate the average system performance under rapid blockage events (such as human blockage events), since none of the existing ray-tracing simulators support accurate physical models of this complex effect. Nevertheless, this section investigates the effectiveness of the transmission techniques over multiple small-scale fading realizations, conditioned on specific instantaneous realizations of the blockage process. We remark that, as commonly done in deterministic ray tracing simulations, the impact of small-scale fading is characterized by replacing statistical operators with corresponding time averaging operators, constructed by finely sampling the channel around a given trajectory point (sub-wavelength spacing, i.e., fractions of millimeter). 

\subsection{Simulation setup}
We set the carrier frequency to $f_c = 100$~GHz, i.e., to the maximum recommended value within Wireless InSite®, and consider a large $10\%$ fractional bandwidth $B = 10$ GHz. We let each transmitter be equipped with a $20\times 20$ critically spaced uniform planar array, driven by power amplifiers with saturated output power $P_{\mathrm{sat}}= 10$~dBm (feasible with CMOS technology), resulting in a maximum equivalent isotropic output power $\mathrm{EIRP} \approx 36$~dBm. We further assume an output backoff $\mathrm{OBO}=6$dB, to mitigate OFDM signal distortion due to transmitter non-linearities without resorting to complex pre/post-distortion techniques. The receiver is omnidirectional, and has a relatively high noise figure of $F=10$ dB, to model the known difficulties in realizing high-quality hardware at these frequencies. The thermal noise spectral density is set to the typical value $N_0 = -175$~dBm/Hz. We assume line-of-sight beam alignment for the entire receiver trajectory, which in practice can be achieved using position side information. 

\subsection{Impact of multipath propagation and blocked signals}
\begin{figure}[h!]
\centering
\includegraphics[width=0.8\linewidth]{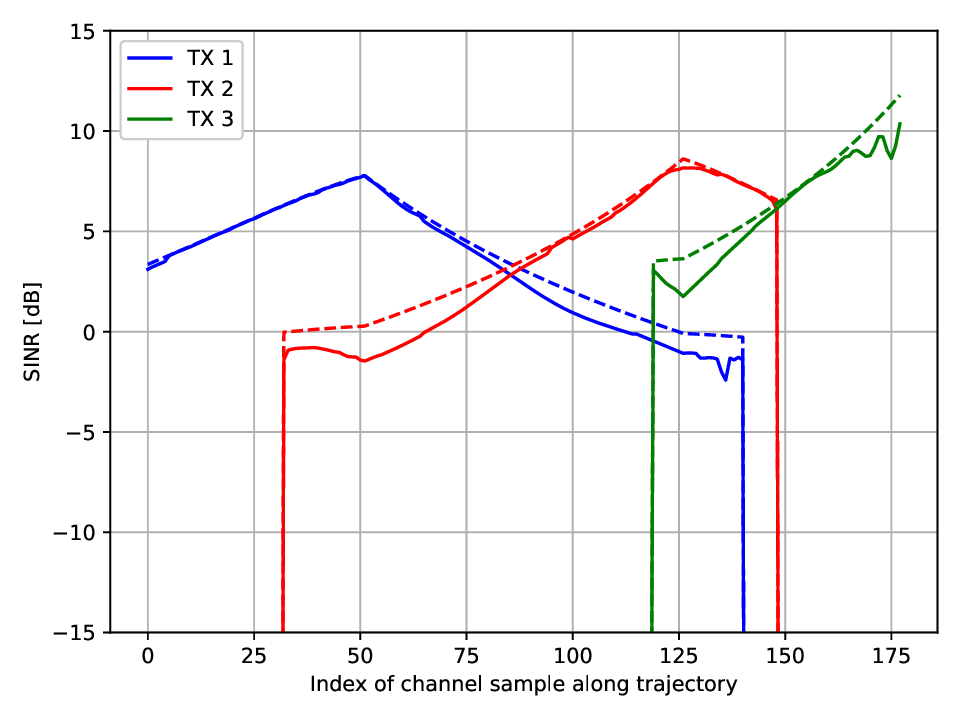}
\caption{Validation of negligible impact of multipath propagation and blocked signals over the entire receiver trajectory.}
\label{fig:singlepath}
\end{figure}
Figure~\ref{fig:singlepath} reports the per-transmitter SNR for equally spaced channel samples along the entire trajectory, calculated by taking the strongest path as useful signal term, and where the power of the other multipath components is either included in the noise term (solid line) or omitted (dashed line). As expected, we observe that line-of-sight blockage events induce extreme SNR losses. Furthermore, we observe that, when the line-of-sight is not blocked, the power of the multipath components is either comparable to the thermal noise power, or negligible. Therefore, Figure~\ref{fig:singlepath} corroborates our assumption in Section~\ref{sec:model} that the impact of multipath propagation and blocked signals can be safely neglected. This, in turn, supports the effectiveness of the intermittent block fading channel model introduced in Section~\ref{sec:model} to simplify the analysis.

\subsection{Impact of artificially induced small-scale fading}
\begin{figure}[h!]
\centering
\includegraphics[width=0.8\linewidth]{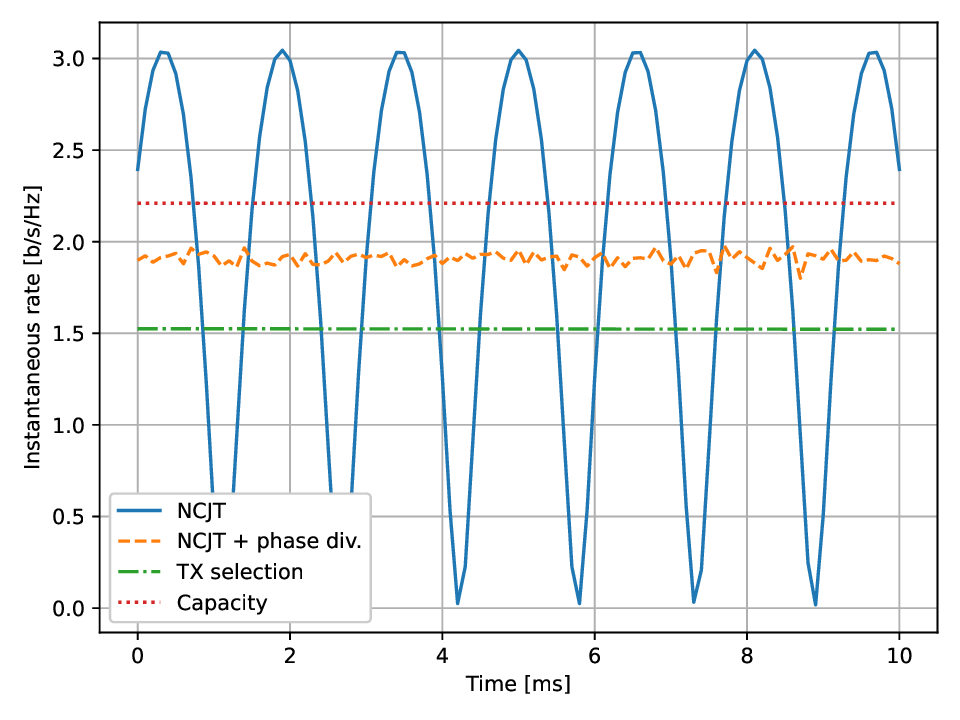}
\caption{Validation of predicted small-scale fading behaviour of different transmission schemes around a given trajectory point. The average rate $\bar{r}$ of NCJT is approximately $1.9$~b/s/Hz, with or without phased diversity.}
\label{fig:fading}
\end{figure}
Figure~\ref{fig:fading} reports the instantaneous achievable rate for $100$ (sub-wavelength spaced) channel samples in the center of the trajectory, by assuming a constant receiver speed of $v=1$~m/s. We focus on perfect synchronization for simplicity, but we remark that, as formally studied in Section~\ref{sec:time}, the worst-case performance under coarse synchronization follows similar trends. As predicted by our theoretical analysis, we observe that non-coherent joint transmission experiences significant signal fluctuations and deep fades. This is due to the small-scale fading artificially induced by the non-coherent superposition of the two strong line-of-sight paths from TX~1 and TX~2 in Figure~\ref{fig:setup}. In contrast, we observe that the use of transmitter selection, phase diversity, or capacity-achieving techniques makes the instantaneous rate much more stable over time, even in the presence of multipath propagation. Note that the performance of transmitter selection reported in Figure~\ref{fig:fading} also corresponds to the performance of all other transmission schemes when the line-of-sight path from either TX~1 or TX~2 is blocked. This further confirms the potential of transmitter selection, phase diversity, or capacity-achieving techniques in providing stable instantaneous rates over multiple blockage events, as the instantaneous rates vary over (essentially) discrete levels that depend only on the specific blockage state. Furthermore, although not explicitly reported, we stress that Figure~\ref{fig:fading} also implicitly shows the instantaneous rate of non-coherent joint Alamouti space-time coding with phase diversity. Depending on how the transmitters are clustered, this technique would yield the same instantaneous rate as either non-coherent joint transmission with phase diversity, or capacity achieving techniques. Hence, this technique will not be further discussed explicitly.

\subsection{Benefits of phase diversity}
\begin{figure}[h!]
\centering
\includegraphics[width=0.8\linewidth]{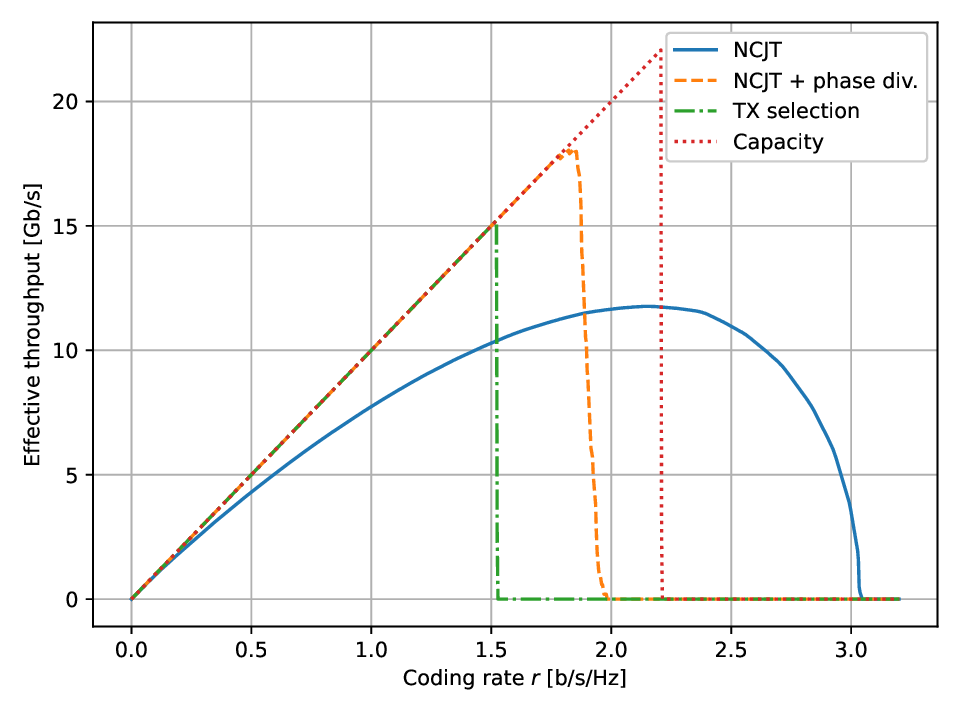}
\caption{Effective throughput for a fixed modulation and coding scheme of rate $r$, under the same setup as in Figure~\ref{fig:fading}.}
\label{fig:outage}
\end{figure}
The benefits of phase diversity are most noticeable by reading Figure~\ref{fig:fading} in terms of outage performance. For example, by assuming a fixed modulation and coding scheme of rate $r= 1.5$ b/s/Hz, non-coherent joint transmission would result into loosing approximately $1/3$ of the packets, hence delivering an effective throughput of $\frac{2}{3}Br \approx 10$ Gb/s, instead of the $Br \approx 15$ Gb/s throughput delivered by transmitter selection. Perhaps more importantly, we notice that non-coherent joint tranmission leads to bursts of data loss of duration $\approx 0.5$ ms every $1.5$ ms, corresponding to loosing $15$ Gb/s $\cdot~0.5$ ms $= 7.5$ Mb every $1.5$ ms. Clearly, these figures would put traditional retransmission protocols under exceptional stress, for instance in the context of buffer design and management. In contrast, phase diversity makes the performance much more stable and supports the same $15$~Gb/s throughput as transmitter selection, while keeping the key benefit of non-coherent joint transmission, i.e., reduced coordination requirements to counteract blockages. In fact, it can even support higher throughput (up to $18$~Gb/s) than transmitter selection, as more formally reported in Figure~\ref{fig:outage}. Furthermore, similar to all the considered suboptimal trasnmission schemes, it does not require complex receiver processing as for capacity achieving schemes based on vector-valued codewords.

In terms of average performance, phase diversity brings no direct gains, as predicted by theory. In particular, by computing the average $\bar{r}$ of the instantaneous rates in Figure~\ref{fig:fading}, we obtain a throughput of $B\bar{r} \approx 19$ Gb/s for non-coherent joint transmission either with or without phase diversity. However, without phase diversity, approaching this average throughput with practical rate adaptation mechanisms seems very challenging, as the system would need to react to continuous rate changes spanning a $3$~b/s/Hz dynamic range every $1.5$~ms. In contrast, phase diversity can approach this average throughput by adapting the rate on a large-scale basis only, i.e., based on the SNR of the non-blocked transmitters. Similar observations hold for the design of HARQ mechanisms.

\section{Conclusion}
\label{sec:conclusion}
Our results provide a set of practical design guidelines for realizing full macrodiversity gains and significant SNR gains in mmWave/sub-THz downlink access networks with low-complexity receivers and minimal coordination and synchronization requirements at the infrastructure side. In particular, we illuminate the importance of transmit diversity techniques to make non-coherent joint transmission a competitive alternative to standard transmitter selection. Specifically, the proposed extension based on phase diversity significantly improves the outage performance of non-adaptive (fixed-rate) coding schemes, and it allows to design rate adaptation or HARQ mechanisms focusing only on the large-scale fluctuations given by the relatively slow blockage process. Furthermore, we illuminate the potential of simple space-time coding schemes to recover a significant fraction of the optimal performance. Finally, we show that robustness against timing offsets can be achieved using OFDM and a simple SNR margin.
 
\appendix
\subsection{Proof of Proposition~\ref{prop:monotonicity}}
\label{app:monotonicity}
We first state two auxiliary lemmas.
\begin{lemma}\label{lem:closedform}
Let $\theta\sim \text{Uniform}(0,2\pi)$. Then $(\forall x\in \stdset{C})$ $(\forall y  \in \stdset{C})$
\begin{equation*} 
\E[\log(1+|x+e^{j\theta}y|^2)] = \log\left(\dfrac{1+a+\sqrt{(1+a)^2-b^2}}{2}\right),
\end{equation*}
where $a := |x|^2+|y|^2$ and $b := 2|x||y|$. 
\end{lemma}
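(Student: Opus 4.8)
The plan is to collapse the bivariate expectation to a single classical one-dimensional integral and then evaluate that integral in closed form. First I would expand
\[
1+|x+e^{j\theta}y|^2 = 1+|x|^2+|y|^2+2\,\mathrm{Re}\!\big(\bar{x}\,e^{j\theta}y\big) = 1+a+b\cos(\theta+\psi),
\]
where $\psi\eqdef\arg(\bar{x}y)$ when $xy\neq 0$ (and $b=0$ otherwise, so $\psi$ is immaterial). Since $\theta\sim\mathrm{Uniform}(0,2\pi)$ and the map $\theta\mapsto\log(1+a+b\cos(\theta+\psi))$ is $2\pi$-periodic, the phase shift $\psi$ drops out, giving
\[
\E\big[\log(1+|x+e^{j\theta}y|^2)\big]=\frac{1}{2\pi}\int_0^{2\pi}\log\!\big(1+a+b\cos\theta\big)\,d\theta .
\]
Note that by AM--GM we have $a=|x|^2+|y|^2\ge 2|x||y|=b\ge 0$, hence $c\eqdef 1+a$ satisfies $c\ge 1+b>b\ge 0$; in particular the integrand is continuous and bounded, so the integral is well defined.

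The core step is then the classical evaluation
\[
F(c)\eqdef\frac{1}{2\pi}\int_0^{2\pi}\log(c+d\cos\theta)\,d\theta=\log\!\left(\frac{c+\sqrt{c^2-d^2}}{2}\right),\qquad c>d\ge 0 .
\]
For $d=0$ this is trivial. For $d>0$ I would prove it by differentiating under the integral sign: using the standard evaluation $\frac{1}{2\pi}\int_0^{2\pi}(c+d\cos\theta)^{-1}\,d\theta=(c^2-d^2)^{-1/2}$ (Weierstrass substitution) one gets $F'(c)=(c^2-d^2)^{-1/2}$, hence $F(c)=\log\!\big(c+\sqrt{c^2-d^2}\big)+\mathrm{const}$; the constant equals $-\log 2$ because $F(c)-\log c\to 0$ while $\log\!\big(c+\sqrt{c^2-d^2}\big)-\log c\to\log 2$ as $c\to\infty$. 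An equivalent, more self-contained route is to factor $c+d\cos\theta=\tfrac{d}{2}\,e^{-j\theta}\big(e^{j\theta}-z_+\big)\big(e^{j\theta}-z_-\big)$ with $z_\pm=\tfrac{-c\pm\sqrt{c^2-d^2}}{d}$, observe that $z_+z_-=1$ so $|z_+|<1<|z_-|$, and apply Jensen's formula $\frac{1}{2\pi}\int_0^{2\pi}\log\big|e^{j\theta}-w\big|\,d\theta=\max\{0,\log|w|\}$ to each factor.

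Substituting $c=1+a$ and $d=b$ into $F$ yields exactly the claimed expression, which completes the argument. I do not expect a genuine obstacle: the only items requiring mild care are the justification of differentiation under the integral (routine, since on any compact $c$-interval bounded away from $d$ the integrand and its $c$-derivative are jointly continuous and uniformly bounded) and the separate but immediate treatment of the degenerate case $b=0$ (i.e.\ $x=0$ or $y=0$), where both sides reduce to $\log(1+a)$.
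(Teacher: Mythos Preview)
Your proposal is correct and follows essentially the same route as the paper: both reduce the expectation to the one-dimensional integral $\frac{1}{2\pi}\int_0^{2\pi}\log(1+a+b\cos\theta)\,d\theta$ and then invoke the closed-form evaluation of $\int\log(c+d\cos\theta)\,d\theta$. The paper simply cites this integral as known (after factoring out $1+a$ and writing it in the normalized form $\int_0^\pi\log(1+c\cos\theta)\,d\theta=\pi\log\tfrac{1+\sqrt{1-c^2}}{2}$), whereas you supply two self-contained derivations (differentiation under the integral and Jensen's formula) and are also more explicit about why the phase offset $\psi=\arg(\bar{x}y)$ can be dropped; otherwise the arguments coincide.
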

\begin{proof}
We define the shorthand $c := \frac{b}{1+a} \in [0,1)$, and let:
\begin{align*}
&\E[\log(1+|x+e^{j\theta}y|^2)] \\
&= \E[\log(1+a+b\cos(\theta))]\\
&= \log(1+a) + \E[\log(1+c\cos(\theta))]\\
&= \log(1+a) + \frac{1}{\pi}\int_0^{\pi}\log\left(1+c\cos(\theta)\right)d\theta \\
&= \log\left(\dfrac{1+a+\sqrt{(1+a)^2-b^2}}{2}\right),
\end{align*} 
where the last equality follows from algebraic manipulations and the known definite integral
\begin{equation*}
\int_0^\pi\log\left(1+c\cos(\theta)\right)d\theta = \pi\log\left(\frac{1+\sqrt{1-c^2}}{2} \right), \quad |c|<1.
\end{equation*}
\end{proof}

\begin{lemma}\label{lem:monotone}For all $x\in \stdset{C}$, the function $f: \stdset{R}_+ \to \stdset{R}$ given by
\begin{equation*} 
f(|y|^2)\eqdef \log\left(\dfrac{1+a+\sqrt{(1+a)^2-b^2}}{2}\right),
\end{equation*}
where $a := |x|^2+|y|^2$ and $b := 2|x||y|$, is strictly increasing within its domain.
\end{lemma}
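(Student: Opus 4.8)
The plan is to strip off the outer logarithm and reduce the claim to the monotonicity of an elementary square-root function of a single variable. Fix $x$, set $A \eqdef |x|^2$, and treat $s \eqdef |y|^2$ as the variable ranging over the domain $\stdset{R}_+$; then $a = A + s$, $b^2 = 4As$, and the quantity appearing under the outer square root in the definition of $f$ is the quadratic $P(s) \eqdef (1 + A + s)^2 - 4As$, so that $f(s) = \log\left(\frac{1 + A + s + \sqrt{P(s)}}{2}\right)$. The key preliminary step is the affine, strictly increasing change of variable $t \eqdef 1 + A + s$: completing the square yields $P = t^2 - 4A(t - 1 - A) = (t - 2A)^2 + 4A$. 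This identity simultaneously shows $P(s) > 0$ on the whole domain (since $t \geq 1 + A \geq 1$, one has $(t-2A)^2 + 4A \geq 4A \geq 0$, and equality would force $t = 0$), so $f$ is well defined and everything below is legitimate.

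Because both $\log$ and the map $s \mapsto t$ are strictly increasing, it then suffices to prove that
\begin{equation*}
h(t) \eqdef t + \sqrt{(t - 2A)^2 + 4A}
\end{equation*}
is strictly increasing in $t$ on $[1 + A, \infty)$. I would simply differentiate: $h'(t) = 1 + \frac{t - 2A}{\sqrt{(t - 2A)^2 + 4A}}$. If $A = |x|^2 = 0$ this reduces to the trivial case $h(t) = t + |t| = 2t$ (recall $t \geq 1$), which is strictly increasing. If $A > 0$, then $\frac{t - 2A}{\sqrt{(t - 2A)^2 + 4A}}$ lies strictly in $(-1,1)$ because $4A > 0$, hence $h'(t) > 0$. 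In either case $h$, and therefore $f$, is strictly increasing on its domain.

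There is essentially no obstacle here: the only non-mechanical step is recognising that, after the substitution $t = 1 + A + s$, the awkward term $\sqrt{(1+a)^2 - b^2}$ becomes $\sqrt{(t - 2A)^2 + 4A}$, whose monotonicity in $t$ is transparent. An equivalent route is to differentiate directly in $s$ and control the sign of the resulting derivative via the identity $P(s) - (A - 1 - s)^2 = 4|x|^2$, which is the same computation reorganised. This lemma supplies precisely the monotonicity ingredient needed, together with Lemma~\ref{lem:closedform}, to establish the strictly increasing part of Proposition~\ref{prop:monotonicity}.
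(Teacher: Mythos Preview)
Your proof is correct and follows essentially the same approach as the paper, which merely sketches ``Standard calculus shows that the argument of the logarithm has strictly positive derivative for all $x\in\stdset{C}$.'' You have supplied the details the paper omits, and your substitution $t=1+A+s$ together with the completed-square identity $(1+a)^2-b^2=(t-2A)^2+4A$ is a clean way to make the sign of the derivative transparent.
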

\begin{proof}(Sketch)
Standard calculus shows that the argument of the logarithm has strictly positive derivative for all $x\in \stdset{C}$.
\end{proof}

We are now ready to prove that $\bar{R}(i)$ is strictly increasing, i.e., that
$(\forall i \geq 0)~\bar{R}(i+1) > \bar{R}(i)$. We focus on $i\geq 1$, since the case $i=0$ is trivial. We define the random variable $X  \eqdef \sum_{l=1}^ie^{j\theta_l}$, independent of $\theta_{i+1}$, and observe that
\begin{align*}
\bar{R}(i+1) &=\E[\log(1+|X+e^{j\theta_{i+i}}|^2P)] \\
&= \E[\E[\log(1+|X+e^{j\theta_{i+1}}|^2P)|X]] \\
& > \E[\log(1+|X|^2P)|X]] = \bar{R}(i),
\end{align*}
where the inequality follows by Lemma~\ref{lem:closedform} and Lemma~\ref{lem:monotone}.

We finally show that $\bar{R}(i)$ is unbounded above, as a direct implication of the following inequality:
\begin{equation*}
\lim_{i\to \infty}\dfrac{\bar{R}(i)}{\log(1+iP)} \geq e^{-1}.
\end{equation*}
To prove the above inequality, we let $X_i \eqdef \frac{1}{\sqrt{i}}\sum_{l=1}^ie^{j\theta_l}$, and observe that by Markov's inequality  $(\forall i \geq 1)$
\begin{align*}
\dfrac{\bar{R}(i)}{\log(1+iP)} &\geq \P\left(\log(1+i|X_i|^2P)\geq \log(1+iP)\right) \\
&= \P(|X_i|^2\geq 1).
\end{align*}
By the central limit theorem, $X_i$ converges in distribution to $X\sim \CN(0,1)$ as $i\to \infty$. Hence, by the continuous mapping theorem, $|X_i|^2$ converges in distribution to $|X|^2\sim \text{Exp}(1)$. Taking limits on both sides concludes the proof: 
\begin{equation*}
\lim_{i\to\infty}\dfrac{\bar{R}(i)}{\log(1+iP)} \geq \lim_{i\to\infty}\P(|X_i|^2\geq 1) = \P(|X|^2\geq 1).
\end{equation*}

\subsection{Proof of Proposition~\ref{prop:R_NCJT}}
\label{app:R_NCJT}
We write $R(L)$ to highlight the dependency of $R$ on $L$. The proof relies on the following identity, which follows by the law of total expectation: $(\forall L\geq 1)$
\begin{align*}
R(L) &= \E\left[\E\left[\log\left(1+\left|\textstyle\sum_{l=1}^L\beta_le^{j\theta_l}\right|^2P\right)\middle|\beta_1,\ldots,\beta_L\right]\right] \\
&= \E[\bar{R}(\alpha_L)], 
\end{align*}
where $\alpha_L \eqdef \sum_{l=1}^L\beta_l$. To prove (i), we observe that $(\forall L\geq 1)$
\begin{align*}
\E[\bar{R}(\alpha_L)] &=\sum_{i = 1}^L\P(\alpha_L = i)\bar{R}(i)\\
&\geq \sum_{i=1}^L\P(\alpha_L=i)\bar{R}(1) = (1-p_B^L)\log(1+P), 
\end{align*}
where the inequality follows from Proposition~\ref{prop:monotonicity}. To prove (ii), we define the shorthand $\Delta\bar{R}(i)\eqdef \bar{R}(i)-\bar{R}(i-1)>0$, where the inequality follows from Proposition~\ref{prop:monotonicity}. We then observe that $(\forall L\geq 2)~\E[\bar{R}(\alpha_L)] = \sum_{i = 1}^L\P(\alpha_L \geq i)\Delta\bar{R}(i)$
\begin{align*}
&= \P(\alpha_L \geq L)\Delta\bar{R}(L)+\sum_{i = 1}^{L-1}\P(\alpha_L \geq i)\Delta\bar{R}(i)\\
&\geq \P(\alpha_L \geq L)\Delta\bar{R}(L)+\sum_{i=1}^{L-1}\P(\alpha_{L-1} \geq i)\Delta\bar{R}(i) \\
&= \P(\alpha_L \geq L)\Delta\bar{R}(L)+\E[\bar{R}(\alpha_{L-1})] >\E[\bar{R}(\alpha_{L-1})],
\end{align*}
where the first inequality follows from the property $(\forall L \geq 1)(\forall i \geq 0)$ $\P(\alpha_L\geq i) \geq \P(\alpha_{L-1} \geq i)$. 

To prove (iii), we observe that, by Markov's inequality:
\begin{equation*}
(\forall i \geq 0)~\E[\bar{R}(\alpha_L)] \geq \P(\alpha_L \geq i)\bar{R}(i).
\end{equation*}
Taking the limit on both sides gives
\begin{equation*}
(\forall i\geq 0)~\lim_{L\to \infty}\E[\bar{R}(\alpha_L)] \geq \lim_{L\to \infty}\P(\alpha_L\geq i)\bar{R}(i) = \bar{R}(i),
\end{equation*}
which, by Proposition~\ref{prop:monotonicity}, implies $\lim_{L\to \infty}\E[\bar{R}(\alpha_L)]= \infty$.

\subsection{Proof of Proposition~\ref{prop:lln}}
\label{proof:lln}
By the law of total probability, we have
\begin{align*}
&\P\left(\left|\dfrac{1}{K}\sum_{k=1}^{K}\log(1+|h[k]|^2P)-\bar{R}(\alpha) \right|\geq \epsilon\right)\\
&=\E\left[\P\left(\left|\dfrac{1}{K}\sum_{k=1}^{K}\log(1+|h[k]|^2P)-\bar{R}(\alpha) \right|\geq \epsilon \middle| \vec{\beta},\vec{\theta}\right) \right].
\end{align*}
We then observe that, conditioned on $(\vec{\beta},\vec{\theta})$, $\{\log(1+|h[k]|^2P)\}_{k=1}^K$ is an i.i.d. sequence with mean $\bar{R}(\alpha)$. Hence, the weak law of large numbers applies and 
\begin{equation*}
\P\left(\left|\dfrac{1}{K}\sum_{k=1}^{K}\log(1+|h[k]|^2P)-\bar{R}(\alpha) \right|\geq \epsilon \middle| \vec{\beta},\vec{\theta}\right) \underset{K\to \infty}{\to} 0.
\end{equation*}
The proof is concluded by exchanging limit and expectation based, e.g., on the dominated convergence theorem (probabilties are trivially bounded). 

\subsection{Proof of Proposition~\ref{prop:C_worstcase}}
\label{proof:C_worstcase}
For the first part of the statement, we  observe that
\begin{align*}
R &= \inf_{\vec{\tau}\in[0,\tau_{\max}]^L}\dfrac{1}{K+D}\sum_{k=0}^{K-1}\E\left[\log\left(1+\sum_{l=1}^L|H_l[k]|^2P\right)\right]\\
&= \inf_{\vec{\delta}\in[0,1)^L}\dfrac{1}{K+D}\sum_{k=0}^{K-1}\E\left[\log\left(1+\sum_{l=1}^L\beta_l|G_l[k]|^2P\right)\right].
\end{align*}
Then, we notice that $|G_l[k]|^2=
\delta_l^2 + (1-\delta_l)^2 + 2\delta_l(1-\delta_l)\cos(2\pi k/K)$ is an upward facing parabola in $\delta_l$, with minimum at $\delta_l = 0.5$, for all subcarriers $k\in \{0,\ldots,K-1\}$. Since the argument of the expectation is monotonic increasing in $|G_l[k]|^2$, the desired expression follows by letting $(\forall l \in \{1,\ldots,L\})$~$\delta_l=0.5$ and from simple algebraic manipulations. For the second part of the statement, we observe that, by standard properties of Riemman sums, we have $(\forall \alpha \in \stdset{N})$
\begin{align*}
&\lim_{K\to \infty} \frac{1}{K+D}\sum_{k=0}^{K-1} \log\left(1+\alpha \frac{P}{2}\left(1+\cos\left(\frac{2\pi k}{K}\right)\right)\right)\\
=&\frac{1}{2\pi}\int_{-\pi}^\pi\log\left(1+\alpha \frac{P}{2}\left(1+\cos\left(\theta\right)\right)\right)d\theta. 
\end{align*}
The integral can be evaluated in closed form using the same technique as in the proof of Lemma~\ref{lem:closedform} in Appendix~\ref{app:monotonicity}, concluding the proof. We remark that the limit and the expectation can be exchanged since $\alpha$ is a discrete random variable.

\subsection{Proof of Proposition~\ref{prop:hoeffding}}
\label{proof:hoeffding}
By the law of total probability, we have
\begin{align*}
&\P\left(\left|\dfrac{1}{K+D}\sum_{k=0}^{K-1}\left(R_k-\bar{R}_k(\vec{\beta})\right) \right|\geq \epsilon\right)\\
&=\E\left[\P\left(\left|\dfrac{1}{K+D}\sum_{k=0}^{K-1}\left(R_k-\bar{R}_k(\vec{\beta})\right) \right|\geq \epsilon \middle| \vec{\beta},\vec{\theta}\right) \right].
\end{align*}
We then observe that, conditioned on $(\vec{\beta},\vec{\theta})$, each $R_k$ is an independent random variable with mean $\bar{R}_k(\vec{\beta})$. Furthermore, we observe that $\frac{1}{K+D}R_k$ is readily upper bounded by $\frac{1}{K}\log(1+L^2P)$. Hence, we can apply Hoeffding's inequality on the sum of bounded independent random variables \cite{hoeffding1963}, which gives
\begin{align*}
\P\left(\left|\sum_{k=0}^{K-1}\frac{R_k-\bar{R}_k(\vec{\beta})}{K+D}\right|\geq \epsilon \middle| \vec{\beta},\vec{\theta}\right)\leq 2 e^{\frac{-2K\epsilon^2}{\log^2(1+L^2P)}}.
\end{align*}
The proof is concluded by exchanging limit and expectation based, e.g., on the dominated convergence theorem (probabilities are trivially bounded). 

\subsection{Proof of Proposition~\ref{prop:R_worstcase}}
\label{proof:R_worstcase}
For all $k\in \{0,\ldots,K-1\}$ and $l'\in \{1,\ldots,L\}$, we define $X \eqdef \sum_{l\neq l'} H_l[k]$ and write $\E\left[\log\left(1+\left|\textstyle \sum_{l=1}^LH_l[k]\right|^2P\right)\right]$
\begin{align*}
& = \E\left[\log\left(1+\left|X+H_{l'}[k]\right|^2P\right)\right] \\
&= \E\left[\E\left[\log\left(1+\left|X+H_{l'}[k]\right|^2P\right)\middle|X,\beta_{l'}\right]\right].
\end{align*}
Using Lemma~\ref{lem:closedform} and Lemma~\ref{lem:monotone} in Appendix~\ref{app:monotonicity}, we notice that the inner expectation is an increasing function of $\beta_{l'}|G_l[k]|^2$. Hence, it is minimized by letting $\delta_{l'}=0.5$ as in the proof of Proposition~\ref{prop:C_worstcase} in Appendix~\ref{proof:C_worstcase}. The second part follows similar steps as the second part of the proof of Proposition~\ref{prop:C_worstcase}, by replacing $\alpha$ with $|h|^2$. The limit and expectation can be exchanged based, e.g., on the dominated convergence theorem.

\bibliographystyle{IEEEbib}
\bibliography{IEEEabrv,refs}

\end{document}